\newtheorem{theorem}{Theorem}
\newtheorem{corollary}[theorem]{Corollary}
\newtheorem{definition}[theorem]{Definition}
\newtheorem{lemma}[theorem]{Lemma}
\newlength{\blank}
\newenvironment{proof}[1][{\hspace{-\blank}}]{{\noindent\textbf{Proof~{#1}.\ }}}{\hfill\qed}
\newenvironment{proofthm}[1]{{\noindent\textbf{Proof~{#1}\ }}}{\hfill\qed}
\newenvironment{remark}{\noindent\textbf{Remark\ }}{}
\newcommand{\ket}[1]{|#1\rangle}
\newcommand{\bra}[1]{\langle#1|}
\newcommand{\CNOT}{\operatorname{CNOT}}
\mathchardef\ordinarycolon\mathcode`\:
\def\vcentcolon{\mathrel{\mathop\ordinarycolon}}
\newcommand{\nc}{\newcommand}
\nc{\rnc}{\renewcommand}
\nc{\beq}{\begin{equation}}
\nc{\eeq}{{\end{equation}}}
\nc{\beqa}{\begin{eqnarray}}
\nc{\eeqa}{\end{eqnarray}}
\nc{\lbar}[1]{\overline{#1}}
\nc{\ketbra}[2]{|#1\rangle\!\langle#2|}
\nc{\proj}[1]{| #1\rangle\!\langle #1 |}
\nc{\avg}[1]{\langle#1\rangle}
\nc{\Rank}{\operatorname{Rank}}
\nc{\smfrac}[2]{\mbox{$\frac{#1}{#2}$}}
\nc{\tr}{\operatorname{Tr}}
\nc{\ox}{\otimes}
\nc{\dg}{\dagger}
\nc{\dn}{\downarrow}
\nc{\cA}{\mathcal{A}}
\nc{\cB}{\mathcal{B}}
\nc{\cC}{\mathcal{C}}
\nc{\cD}{\mathcal{D}}
\nc{\cE}{\mathcal{E}}
\nc{\cF}{\mathcal{F}}
\nc{\cG}{\mathcal{G}}
\nc{\cH}{\mathcal{H}}
\nc{\cI}{\mathcal{I}}
\nc{\cJ}{\mathcal{J}}
\nc{\cK}{\mathcal{K}}
\nc{\cL}{\mathcal{L}}
\nc{\cM}{\mathcal{M}}
\nc{\cN}{\mathcal{N}}
\nc{\cO}{\mathcal{O}}
\nc{\cP}{\mathcal{P}}
\nc{\cR}{\mathcal{R}}
\nc{\cS}{\mathcal{S}}
\nc{\cT}{\mathcal{T}}
\nc{\cX}{\mathcal{X}}
\nc{\cZ}{\mathcal{Z}}
\nc{\csupp}{{\operatorname{csupp}}}
\nc{\qsupp}{{\operatorname{qsupp}}}
\nc{\var}{\operatorname{var}}
\nc{\rar}{\rightarrow}
\nc{\lrar}{\longrightarrow}
\nc{\polylog}{\operatorname{polylog}}
\nc{\id}{{\operatorname{id}}}
\nc{\RR}{{{\mathbb R}}}
\nc{\CC}{{{\mathbb C}}}
\nc{\FF}{{{\mathbb F}}}
\nc{\NN}{{{\mathbb N}}}
\nc{\ZZ}{{{\mathbb Z}}}
\nc{\PP}{{{\mathbb P}}}
\nc{\QQ}{{{\mathbb Q}}}
\nc{\UU}{{{\mathbb U}}}
\nc{\EE}{{{\mathbb E}}}
\nc{\qed}{{\hfill$\Box$}}
\def\>{\rangle}
\def\<{\langle}
\begin{document}

\title{Operational Resource Theory of Coherence}

\author{Andreas Winter}
\email{andreas.winter@uab.cat}
\affiliation{ICREA -- Instituci\'{o} Catalana de Recerca i Estudis Avan\c{c}ats, %
Pg.~Llu\'{\i}s Companys, 23, 08010 Barcelona, Spain}
\affiliation{F\'{\i}sica Te\`{o}rica: Informaci\'{o} i Fen\`{o}mens Qu\`{a}ntics, %
Universitat Aut\`{o}noma de Barcelona, ES-08193 Bellaterra (Barcelona), Spain}

\author{Dong Yang}
\email{dyang@cjlu.edu.cn}
\affiliation{F\'{\i}sica Te\`{o}rica: Informaci\'{o} i Fen\`{o}mens Qu\`{a}ntics, %
Universitat Aut\`{o}noma de Barcelona, ES-08193 Bellaterra (Barcelona), Spain}
\affiliation{Laboratory for Quantum Information, China Jiliang University, Hangzhou, Zhejiang 310018, China}

\date{16 January 2016}

\begin{abstract}
We establish an operational theory of coherence (or of superposition) in
quantum systems, by focusing on the optimal rate of performance of certain tasks.
Namely, we introduce the two basic concepts --- ``coherence distillation''
and ``coherence cost'' in the processing quantum states under so-called
\emph{incoherent operations} [Baumgratz/Cramer/Plenio, Phys. Rev. Lett. 113:140401 (2014)].
We then show that in the asymptotic limit of many copies of a state,
both are given by simple single-letter formulas: the distillable coherence 
is given by the relative entropy of coherence (in other words, we give the 
relative entropy of coherence its operational interpretation), and the coherence 
cost by the coherence of formation, which is an optimization over convex
decompositions of the state.
An immediate corollary is that there exists no bound coherent state in the sense 
that one would need to consume coherence to create the state but no coherence could 
be distilled from it. Further we demonstrate that the coherence theory is 
generically an irreversible theory by a simple criterion that completely characterizes 
all reversible states. 
\end{abstract}

\pacs{03.65.Aa, 03.67.Mn}

\maketitle

{\it Introduction.}---
The universality of the superposition principle is the
fundamental non-classical characteristic of quantum mechanics: 
Given any configuration space $X$, its elements $x$ label an
orthogonal basis $\ket{x}$ of a Hilbert space, and we have
all superpositions $\sum_x \psi_x \ket{x}$ as the possible 
states of the system. In particular, we could choose a completely 
different orthonormal basis as an equally valid computational
basis, in which to express the superpositions. 
However, often a basis is distinguished, be it the eigenbasis
of an observable or of the system's Hamiltonian, so that conservation
laws or even superselection rules apply.
In such a case, the eigenstates $\ket{x}$ are distinguished as ``simple''
and superpositions are ``complex''. Indeed, in the presence of
conservation laws, structured superpositions of eigenstates can serve
as so-called ``reference frames'', which are resources to overcome
the conservation laws~\cite{AharonovSusskind,Kitaev,BRS-ref}.
Based on this idea, {\AA}berg~\cite{Aaberg:superposition} and more 
recently Baumgratz \emph{et al.}~\cite{BCP14} have proposed to consider
any non-trivial superposition as a resource, and to create a
theory in which computational basis states and their probabilistic
mixtures are for free (or worthless), and operations preserving
these ``incoherent'' states are free as well. This suggests that 
coherence theory can be regarded as a resource theory. 

Let us briefly recall the general structure of a quantum resource theory 
(QRT) and basic questions that be asked in a QRT through entanglement 
theory (ET), which is a well-known QRT. For our purposes, a QRT has 
three ingredients: (1) free states (separable state in ET), (2) 
resource states (entangled state in ET), (3) the restricted or free operations 
(LOCC in ET). A necessary requirement for a consistent QRT is that no 
resource state can be created from any free state under any free operation. 
The QRT is then the study of interconversion between resource states under 
the restricted operations. The pure resource states play a special role 
and are much more preferable because usually they are used to circumvent 
the restriction on operations (Bell states are used in teleportation to 
overcome LOCC operations, for instance). So the conversions between 
pure resource states and mixed ones are a major focus of QRTs.  
A standard unit resource measure can be constructed if the conversions 
between pure states are asymptotically reversible (entropy of entanglement 
of a pure entangled state in ET, with a Bell state as the unit). 
Then there are two basic transformation processes that are well-motivated: 
one is the so-called resource distillation, that is the transformation 
from a mixed resource state to the unit resource, and the other is 
resource formation, that is the reverse transformation from the unit 
resource to a mixed state (entanglement distillation and entanglement 
formation in ET). Because of the reversibility in the pure state conversion 
we need not worry about what kind of pure state is the target, as they 
are equivalent up to the transformation rate between them. Thus, two 
well-motivated quantities arise from the two basic processes, distillable 
resource and resource cost, which have a clear operational interpretation 
(distillable entanglement and entanglement cost in ET). The principal 
objective of the theory is the characterization of these two quantities. 
This is often a highly complex problem, but resource monotones yield 
various limits on possible transformations and achievable rates.  
Another basic question in any QRT is to ask whether the theory is 
irreversible or not. If the conversion between pure states is reversible, 
then the reversibility problem is reduced to the question whether or 
not the optimal conversion rate in the formation process is equal to 
that in the distillation process. 
If a QRT is reversible, a unique resource measure exits, quantifying the 
conversion rate between different states, so that everything about possible 
resource transformations becomes clear and simple. However, if a QRT is 
irreversible, the phenomena are ample and further interesting questions can 
be asked, for example whether there exist so-called bound resource states 
as an analogue of bound entangled states \cite{HHH-bound,Yang2005} in ET, in the 
sense that from them no resource can be distilled, but for which, in order 
to create them, nonzero resource is required. Several QRTs were constructed 
along these lines, some them indeed irreversible: entanglement theory (with 
respect to LOCC) \cite{HHHH,PV}, thermodynamics (w.r.t.~energy-conserving 
operations and thermal states) \cite{HO,Brandao-etal}, reference frames 
(w.r.t.~group-covariant operations) \cite{frame}, etc.

In this Letter, we establish an operational coherence theory in the 
framework proposed in \cite{BCP14} and \cite{Aaberg:superposition}. Namely,
first we show that the conversion between the pure coherent states is 
asymptotically reversible, so that the standard unit coherence measure 
exists. Then we introduce the basic transformation processes -- coherence 
distillation and coherence formation --, from which two basic coherence 
measures naturally arise: distillable coherence and coherence cost, 
with operational interpretations. Remarkably, both are given by information 
theoretic single-letter formulas that hence make these quantities computable. 
These results in turn allow us to formulate a simple criterion to decide whether 
a given state is irreversible or not and to show that there is no bound 
coherence. Although the main results are in the asymptotic setting, we 
also get the single copy conversion of pure states along the way. In the 
following, we state and discuss our results carefully, while all proofs 
are found in Appendix B.

\medskip
{\it Coherence as a resource theory.}---
We follow the framework of coherence theory by Baumgratz \emph{et al.}~\cite{BCP14}. 
Let $\{\ket{i}\}$ be a fixed basis in the finite dimensional Hilbert state. 
The free states called \emph{incoherent states} are those whose density matrices are 
diagonal in the basis, being of the form $\sum_i p_i\proj{i}$ where $p_i$ 
is a probability distribution and the set of incoherent states is denoted as $\Delta$. 
The resource states called \emph{coherent states} are those not of this form. 
Quantum operations are specified by a set of Kraus operators $\{K_\ell\}$ 
satisfying $\sum_\ell K_\ell^{\dagger}K_\ell = \1$;
a quantum operation can have many different Kraus representations.
The free operations, called \emph{incoherent operations (IC)} are those for which
there exists a Kraus representation $\{K_\ell\}$ such that 
$\frac{1}{\tr\rho K_\ell^\dagger K_\ell} K_\ell\rho K_\ell^{\dagger}\in\Delta$ 
for all $\ell$ and all $\rho\in \Delta$.  
Such restriction guarantees that even if one has access to individual measurement
outcomes $\ell$ of the instrument $\{K_\ell\}$,  one cannot generate coherent states
from an incoherent state. 
Under this restriction, each Kraus operator is easily seen to be of the form 
$K_\ell=\sum_i c(i)\ketbra{j(i)}{i}$ where $j(i)$ is a function from the index 
set of the computational basis, and $c(i)$ are coefficients; 
we call such Kraus operators incoherent, too. If not only $K_\ell$ but
also $K_\ell^\dagger$ is incoherent, we call it \emph{strictly incoherent},
and the corresponding quantum operation a strictly incoherent operation.
Strict incoherence can be characterized by the function $j(i)$ above
being one-to-one.
Another equivalent form of a general incoherent Kraus operator is
$K = \sum_j \ketbra{j}{\gamma_j}$, 
with $\ket{\gamma_j} \in \operatorname{span}\{\ket{i}:i\in S_j\}$
for a partition $[d] = {\dot{\bigcup}}_j S_j$. 

We will be mainly concerned with state transformations, for which
we introduce some notation: 
$\rho \stackrel{{\rm IC}}{\longmapsto} \sigma$ means that there exists an incoherent
operation $T$ such that $\sigma = T(\rho) = \sum_\ell K_\ell \rho K_\ell^\dagger$;
if the operation is strictly incoherent, we write
$\rho \stackrel{{\rm IC}_0}{\longmapsto} \sigma$. If the transformation is
obtained probabilistically, i.e.~if $\sigma \propto K_\ell\rho K_\ell^\dagger$
and $\tr K_\ell\rho K_\ell^\dagger \neq 0$ for some $\ell$, we write
$\rho \stackrel{{\rm pIC}}{\longmapsto} \sigma$ and
$\rho \stackrel{{\rm pIC}_0}{\longmapsto} \sigma$, for
probabilistic incoherent and probabilistic strictly incoherent 
transformations, respectively.

When we consider composite systems, we simply declare as incoherent the
tensor product basis of the local computational bases;
the incoherent operations are then defined with respect to 
the tensor product basis. Notice that there are several special 
transformations that are incoherent: phase and permutation unitaries. 
In particular, in a composite system of two qudits, 
$\CNOT:\ket{i}\ket{j} \mapsto \ket{i}\ket{i+j\mod d}$ is
an incoherent operation, as it is simply a permutation
of the tensor product basis vectors~\cite{BCA,Adesso-et-al}. 
In Appendix C we discuss a slightly more general and flexible model.

We define the decohering operation $\Delta(\rho)=\sum_i \bra{i} \rho \ket{i} \proj{i}$,
i.e.~the diagonal part of the density matrix. This makes the incoherent
states $\Delta$ the image of the map $\Delta$, thus justifying the slight
abuse of notation.

\medskip
{\it Pure state transformations.}---
We start by developing the theory of pure states transformations;
the main result in the present context is that in the asymptotic setting of many 
copies this becomes reversible, the rates governed by the entropy of the decohered 
state, a quantity we dub \emph{entropy of coherence}. First however, we review the
situation for exact single-copy transformations.

We start with a simple observation on ranks: Namely, let
$\varphi$ be transformed to $\psi$ by an incoherent, or more generally
a probabilistic incoherent operation, $\ket{\psi} \propto K \ket{\varphi} \neq 0$,
for an incoherent Kraus operator $K = \sum_i c_i \ketbra{j(i)}{i}$.
The rank $r$ of $\Delta(\varphi)$ is precisely the number of nonzero
diagonal entries of $\Delta(\varphi)$, which is the number of
nonzero terms in $\ket{\varphi} = \sum_{i\in R} \varphi_i \ket{i}$,
$|R|=r$. Thus,
\[
  \ket{\psi} \propto K \ket{\varphi} = \sum_{i\in R} \varphi_i c_i \ket{j(i)}
\]
has at most $r=|R|$ terms. 
This proves the following.

\begin{lemma}
  \label{lemma:rank-nonincrease}
  If $\varphi \stackrel{{\rm pIC}}{\mapsto} \psi$, then
  $\operatorname{rank} \Delta(\psi) \leq \operatorname{rank} \Delta(\varphi)$,
  i.e.~the rank of the diagonal part of pure states 
  cannot increase under incoherent operations.
\end{lemma}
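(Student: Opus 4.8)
The plan is to make the rank-counting argument in the paragraph preceding the statement fully rigorous. The key observation, already set up in the excerpt, is that an incoherent Kraus operator has the special form $K = \sum_i c_i \ketbra{j(i)}{i}$, so that when it acts on a pure state it can only move and rescale the existing basis components of $\ket{\varphi}$ — it can never introduce a new basis vector that was not already present. Since $\operatorname{rank}\Delta(\psi)$ equals the number of nonzero coefficients in the expansion of $\ket{\psi}$ in the fixed basis $\{\ket{i}\}$, bounding this count is exactly what we need.

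First I would fix notation: write $\ket{\varphi} = \sum_{i\in R}\varphi_i\ket{i}$ where $R$ is the set of indices with $\varphi_i\neq 0$, so that $\operatorname{rank}\Delta(\varphi)=|R|=r$. Applying the incoherent Kraus operator gives $\ket{\psi}\propto K\ket{\varphi}=\sum_{i\in R}\varphi_i c_i\ket{j(i)}$. The crucial point is that after collecting terms with the same image $\ket{j(i)}$, the number of distinct basis vectors $\ket{j(i)}$ appearing with nonzero total coefficient is at most the number of summands, namely $|R|=r$. Hence $\ket{\psi}$ has at most $r$ nonzero components, giving $\operatorname{rank}\Delta(\psi)\leq r=\operatorname{rank}\Delta(\varphi)$.

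This settles the probabilistic single-Kraus-operator case $\varphi\stackrel{{\rm pIC}}{\mapsto}\psi$ directly. The only subtlety is that a general (deterministic) incoherent operation $T$ is a sum $\sum_\ell K_\ell\rho K_\ell^\dagger$ over several Kraus operators, so one must check that the stated conclusion is really about the probabilistic branches. Since the lemma is phrased for $\varphi\stackrel{{\rm pIC}}{\mapsto}\psi$, where $\ket{\psi}\propto K_\ell\ket{\varphi}$ for a single $\ell$, the argument above applies verbatim to that branch; the deterministic case for pure-state targets in fact reduces to it, because a pure output $\psi$ from a pure input $\varphi$ forces each nonzero branch $K_\ell\ket{\varphi}$ to be proportional to the same $\ket{\psi}$.

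I do not expect any serious obstacle here: the result is essentially a counting argument, and the structural form of incoherent Kraus operators — that each column of $K$ is supported on a single basis vector $\ket{j(i)}$ — is precisely what makes the image support no larger than the input support. The one place to be careful is the possibility of cancellation going the \emph{wrong} way: distinct indices $i,i'\in R$ with $j(i)=j(i')$ merge into one term, which can only \emph{decrease} the count, never increase it. Thus the inequality is safe, and equality can fail exactly when such collisions occur.
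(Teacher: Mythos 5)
Your argument is correct and is essentially identical to the paper's own proof, which is precisely the counting argument in the paragraph preceding the lemma: $K\ket{\varphi}=\sum_{i\in R}\varphi_i c_i\ket{j(i)}$ has at most $|R|$ terms, so the diagonal rank cannot increase. Your additional remarks about collisions $j(i)=j(i')$ only decreasing the count, and about reducing the deterministic case to a single branch, are correct refinements but not a different route.
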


\begin{theorem}[{Cf.~Du/Bai/Guo~\cite{DuBaiGuo}}]
  \label{thm:majorization}
  For two pure states $\psi = \proj{\psi}$ and $\varphi = \proj{\varphi}$,
  if $\Delta(\psi)$ majorizes $\Delta(\varphi)$, 
  $\Delta(\psi) \succ \Delta(\varphi)$, then there is an incoherent
  (in fact, a strictly incoherent) operation transforming $\varphi$ to $\psi$:
  $\varphi \stackrel{{\rm IC}_0}{\mapsto} \psi$.
  
  Conversely, if $\varphi \stackrel{{\rm IC}_0}{\mapsto} \psi$, or if
  $\varphi \stackrel{{\rm IC}}{\mapsto} \psi$ and in addition
  $\operatorname{rank} \Delta(\varphi) = \operatorname{rank} \Delta(\psi)$,
  then $\Delta(\psi) \succ \Delta(\varphi)$.
\end{theorem}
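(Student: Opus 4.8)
The statement is the coherence-theoretic analogue of Nielsen's majorization theorem for LOCC transformations of bipartite pure states, with the diagonal (decohered) distributions $p=(|\psi_i|^2)$ and $q=(|\varphi_i|^2)$ playing the role of the squared Schmidt coefficients. The plan is to prove both implications by extracting, respectively building, a doubly stochastic matrix $D$ relating $p$ and $q$, and then invoking the Hardy--Littlewood--P\'olya characterization $\Delta(\psi)\succ\Delta(\varphi)\Leftrightarrow q=Dp$ for some doubly stochastic $D$. Throughout I may assume the amplitudes of $\ket\varphi$ and $\ket\psi$ are real and nonnegative: a diagonal phase unitary is strictly incoherent, so composing the desired map with such unitaries before and after changes neither the diagonal distributions nor the membership in the IC/IC$_0$ class.

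For the forward direction I would use the classical fact that $\Delta(\psi)\succ\Delta(\varphi)$ implies $q=T_k\cdots T_1 p$ for finitely many $T$-transforms, each acting nontrivially on only two coordinates $\{a,b\}$ as $T_{aa}=T_{bb}=t$, $T_{ab}=T_{ba}=1-t$ with $t\in[0,1]$. Since strictly incoherent operations compose, it suffices to realize a single elementary step: given a state with distribution $q'=Tp'$ on $\{a,b\}$ (identical elsewhere), produce the state with distribution $p'$. I would write an instrument with two Kraus operators, each scalar $\sqrt{\pi_\ell}\,\1$ off the block $\{a,b\}$, and on the block equal to $\mathrm{diag}(x,y)$ and the coefficient-weighted swap $\left(\begin{smallmatrix}0&u\\v&0\end{smallmatrix}\right)$ respectively; both are strictly incoherent. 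Imposing $K_\ell\ket\varphi=\sqrt{\pi_\ell}\ket\psi$ fixes $x,y,u,v$ in terms of $p',q'$, and the completeness relation $\sum_\ell K_\ell^\dagger K_\ell=\1$ reduces to exactly the $T$-transform equations $\pi_1 p'_a+\pi_2 p'_b=q'_a$, $\pi_1 p'_b+\pi_2 p'_a=q'_b$ with $\pi_1+\pi_2=1$, solved by $\pi_1=t\ge 0$. Chaining these steps carries $\varphi$ to $\psi$ strictly incoherently.

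For the converse, purity of the output is the key leverage: since $\psi=\sum_\ell K_\ell\varphi K_\ell^\dagger$ is a rank-one state written as a mixture of the subnormalized vectors $K_\ell\ket\varphi$, each branch must satisfy $K_\ell\ket\varphi=\sqrt{\pi_\ell}\,e^{i\theta_\ell}\ket\psi$, with $\pi_\ell=\|K_\ell\ket\varphi\|^2$ summing to $1$. In the strictly incoherent case $K_\ell=\sum_i c_i^{(\ell)}\ketbra{\sigma_\ell(i)}{i}$ with $\sigma_\ell$ a permutation, $K_\ell^\dagger K_\ell$ is diagonal and completeness gives $\sum_\ell|c_i^{(\ell)}|^2=1$; matching coefficients yields $|c_i^{(\ell)}|^2 q_i=\pi_\ell p_{\sigma_\ell(i)}$ (also on the null branches, where $c_i^{(\ell)}=0$ on the support), whence $q_i=\sum_\ell|c_i^{(\ell)}|^2 q_i=\sum_m D_{im}p_m$ with $D_{im}=\sum_{\ell:\sigma_\ell(i)=m}\pi_\ell$. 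Bijectivity of each $\sigma_\ell$ makes $D$ doubly stochastic, so $\Delta(\psi)\succ\Delta(\varphi)$.

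The remaining and most delicate case is a general incoherent $K_\ell=\sum_i c_i^{(\ell)}\ketbra{j_\ell(i)}{i}$ under the hypothesis $\operatorname{rank}\Delta(\varphi)=\operatorname{rank}\Delta(\psi)=r$. First I would argue that rank preservation forces $j_\ell$ to be injective on the support $R$ of $\varphi$ with all $c_i^{(\ell)}\neq 0$ there for every branch with $\pi_\ell>0$; otherwise $K_\ell\ket\varphi$ would have fewer than $r$ terms, contradicting $K_\ell\ket\varphi\propto\ket\psi$ via Lemma~\ref{lemma:rank-nonincrease}. With injectivity restored on $R$ one again obtains $|c_i^{(\ell)}|^2 q_i=\pi_\ell p_{j_\ell(i)}$, and the matrix $D_{im}=\sum_{\ell:\pi_\ell>0,\,j_\ell(i)=m}\pi_\ell$, restricted to the supports of $\varphi$ and $\psi$, has unit row sums and, since $j_\ell(R)$ equals the support of $\psi$, unit column sums. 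The one subtlety --- that non-injective null branches ($\pi_\ell=0$) might carry weight $\sum_{\ell:\pi_\ell=0}|c_i^{(\ell)}|^2>0$ on $R$ --- is the step I expect to be the main obstacle; it is resolved by comparing $\sum_m D_{im}p_m=q_i\sum_{\ell:\pi_\ell>0}|c_i^{(\ell)}|^2$ with the double stochasticity of $D$, since summing over $i$ forces $\sum_{\ell:\pi_\ell>0}|c_i^{(\ell)}|^2=1$ for every $i\in R$, giving $q=Dp$ and hence $\Delta(\psi)\succ\Delta(\varphi)$.
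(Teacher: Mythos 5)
Your proof is correct, but both halves take a genuinely different route from the paper's. For achievability the paper uses the representation $\vec q=\sum_\pi\lambda_\pi\vec p^{\,\pi}$ of majorization as a convex mixture of permuted vectors and writes down a single instrument with one strictly incoherent Kraus operator $K_\pi=\sum_i\sqrt{\lambda_\pi}\sqrt{p_{\pi(i)}/q_i}\,\ketbra{\pi(i)}{i}$ per permutation, each satisfying $K_\pi\ket{\varphi}=\sqrt{\lambda_\pi}\ket{\psi}$; your T-transform chain achieves the same via a sequence of two-outcome instruments, at the price of checking that each elementary step closes (which you do correctly) and with the benefit of resting only on the Hardy--Littlewood--P\'olya decomposition. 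For the converse the paper does not extract a doubly stochastic matrix at all: it embeds both states as maximally correlated bipartite states $\sum_i\sqrt{q_i}\ket{i}\ket{i}$, simulates the incoherent instrument by an LOCC protocol (Alice measures $\{K_\ell\}$ and announces $\ell$, Bob applies the permutation $\pi_\ell$), and then invokes Nielsen's theorem. Your direct construction of $D$ with $q=Dp$ is more self-contained, and it is in fact more careful than the paper on one point: the null branches ($\pi_\ell=0$) in the general-IC case with equal ranks, where a non-injective $j_\ell$ can annihilate $\ket{\varphi}$ by cancellation without all $c_i^{(\ell)}$ vanishing, so that $\sum_{\ell:\pi_\ell>0}|c_i^{(\ell)}|^2=1$ is not immediate; your normalization argument, summing $q_i\lambda_i$ against the unit column sums of $D$, closes this cleanly, whereas the paper simply asserts the permutation form of every $K_\ell$. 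The only loose end on your side is the implicit assumption $q'_a,q'_b>0$ in the elementary T-step (you divide by $\sqrt{q'_a}$ and $\sqrt{q'_b}$); the paper dispatches the analogous issue by assuming all $q_i>0$ after reducing the dimension, and a one-line remark to the same effect would complete your argument.
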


Here, the majorization relation for matrices, $\rho \succ \sigma$, 
means that the spectra 
$\operatorname{spec}(\rho) = \vec{p} = (p_1 \geq \ldots \geq p_d)$
and 
$\operatorname{spec}(\sigma) = \vec{q} = (q_1 \geq \ldots \geq q_d)$ are
in majorization order~\cite{MarshallOlkin,AlbertiUhlmann,Nielsen:majorization}:
\begin{align*}
  \forall t < d\ \sum_{i=1}^t p_i &\geq \sum_{i=1}^t q_i, \text{ and }
                 \sum_{i=1}^d p_i  =    \sum_{i=1}^d q_i.
\end{align*}

As a consequence of Theorem~\ref{thm:majorization}, just as for pure state
entanglement~\cite{Nielsen:majorization}, there is catalysis for pure state incoherent
transformations, cf.~\cite{Jonathan-catalyst}; in~\cite{Duan},
examples such that initial and final states have equal rank 
are given. An immediate corollary of Theorem~\ref{thm:majorization} is the following.

\begin{corollary}[{Baumgratz \emph{et al.}~\cite{BCP14}}]
  \label{cor:max-coherent-gen-all}
  Let $\rho$ be an arbitrary state in $\CC^d$, and
  $\Phi_d = \proj{\Phi_d} = \frac1d \sum_{ij=0}^{d-1} \ketbra{i}{j}$. 
  Then, there is an incoherent operation transforming
  $\Phi_d$ to $\rho$. 
\end{corollary}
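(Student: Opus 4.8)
The plan is to reduce the claim to the pure-state conversion already supplied by Theorem~\ref{thm:majorization}, and then recover the mixedness of the target $\rho$ by forming a classical mixture of incoherent channels.

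First I would observe that $\Delta(\Phi_d) = \frac1d\1$, so the spectrum of the decohered maximally coherent state is the uniform distribution $(1/d,\ldots,1/d)$. Since the uniform distribution is the minimal element of the majorization order --- every probability vector majorizes it --- we have $\Delta(\psi) \succ \Delta(\Phi_d)$ for \emph{every} pure state $\psi \in \CC^d$, including degenerate ones whose diagonal has zero entries (the partial-sum inequalities all read $\geq k/d$, with equality of the totals). By the achievability direction of Theorem~\ref{thm:majorization} there is thus a strictly incoherent operation $T_\psi$ with $T_\psi(\Phi_d) = \proj{\psi}$; in particular $\Phi_d$ can be deterministically converted into any pure state.

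Next I would write a convex decomposition $\rho = \sum_k p_k \proj{\psi_k}$ --- the eigendecomposition will do. For each $k$ pick an incoherent $T_k$ with $T_k(\Phi_d) = \proj{\psi_k}$, and set $T = \sum_k p_k T_k$. It remains to check that $T$ is again incoherent: if $\{K^{(k)}_\ell\}_\ell$ are incoherent Kraus operators for $T_k$, then $\{\sqrt{p_k}\,K^{(k)}_\ell\}_{k,\ell}$ are Kraus operators for $T$, each still of the incoherent form $\sum_i c(i)\ketbra{j(i)}{i}$ (scaling by $\sqrt{p_k}$ does not affect this), and $\sum_{k,\ell} p_k \,(K^{(k)}_\ell)^\dagger K^{(k)}_\ell = \sum_k p_k \1 = \1$. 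Hence $T$ is an incoherent operation and $T(\Phi_d) = \sum_k p_k T_k(\Phi_d) = \rho$, as claimed.

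The only step that needs any care is the mixing step, i.e.\ confirming that a probabilistic combination of incoherent operations is still incoherent; conceptually this is just the statement that appending and reading a classical label is free in coherence theory. Everything else is immediate once one notes that $\Phi_d$ sits at the very bottom of the majorization order, which is precisely what makes it the maximally coherent (``most useful'') pure state.
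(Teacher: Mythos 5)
Your proof is correct and follows essentially the same route as the paper: the pure-state case via Theorem~\ref{thm:majorization} (using that $\Delta(\Phi_d)=\frac1d\1$ is majorized by every probability vector), and the mixed-state case by taking a convex combination of the resulting incoherent operations. Your write-up merely makes explicit the two checks the paper leaves implicit, namely the majorization inequality and the verification that a probabilistic mixture of incoherent channels is again incoherent.
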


This motivates the name \emph{maximally coherent state} for $\Phi_d$.
In addition to enabling the creation of arbitrary $d$-dimensional
coherent superpositions by incoherent means, $\Phi_d$ also allows
the implementation of arbitrary unitaries $U\in{\rm SU}(d)$~\cite{BCP14}.
Then, fixing the qubit maximally coherent pure state 
$\Phi_2 = \frac12 \sum_{ij=0}^1 \ketbra{i}{j}$ as a unit reference, we are 
ready to consider asymptotic pure states transformations with vanishing error
as the number of copies goes to infinity. Special cases of this are 
coherence concentration, the transformation from a non-maximally coherent pure 
state to the unit coherent state, and coherence dilution from unit 
coherent state to the non-maximally coherent pure one. 
As in information theory, entanglement theory and other similar cases
(cf.~\cite{CoverThomas,Wilde-book}), this simplifies the picture dramatically.
To express our result, we introduce the entropy of coherence for pure states as
\[
  C(\psi) = S\bigl(\Delta(\psi)\bigr).
\]
Here $S(\rho)=-\tr\rho\log\rho$ is the von Neumann entropy, where 
logarithms are to base 2. Note the maximum value of this functional 
among $d$-dimensional states is attained on $\Phi_d$:
$C(\Phi_d) = \log d$, in particular $C(\Phi_2) = 1$ for the
unit coherence resource.

\begin{theorem}[{Yuan/Zhou/Cao/Ma~\cite{Yuan}}]
  \label{thm:asymptotic}
  For two pure states $\psi$ and $\varphi$ and a rate $R\geq 0$,
  the asymptotic incoherent transformation
  \[
    \psi^{\ox n} \stackrel{\text{IC}}{\longmapsto} \,\,\stackrel{1-\epsilon}\approx \varphi^{\ox nR}
    \text{ as } n\rightarrow\infty,\ \epsilon\rightarrow 0,
  \]
  is possible if $R < \frac{C(\psi)}{C(\varphi)}$ and impossible
  if $R > \frac{C(\psi)}{C(\varphi)}$. 
  
  In particular, $\psi$ can be asymptotically reversibly transformed
  into $\Phi_2$, and vice versa, at optimal rate $C(\psi)$.
\end{theorem}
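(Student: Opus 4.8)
The plan is to reduce the asymptotic statement to the single-copy majorization criterion of Theorem~\ref{thm:majorization} together with the asymptotic equipartition property (AEP) for the decohered state $\Delta(\psi)^{\otimes n}=\Delta(\psi^{\otimes n})$. The key observation is that the typical-subspace projection $P=\sum_{\vec\imath\in\cT}\proj{\vec\imath}$ onto the typical set $\cT$ of $\Delta(\psi)^{\otimes n}$ has the incoherent Kraus form $\sum_i c_i\ketbra{j(i)}{i}$ (with $j=\operatorname{id}$), so cutting a pure state down to its typical subspace is a probabilistic, asymptotically deterministic, incoherent operation. I would establish achievability and the converse separately; the matching rates then yield the reversibility asserted in the ``in particular'' clause upon specialising $\varphi=\Phi_2$ and using $C(\Phi_2)=1$.

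For achievability ($R<C(\psi)/C(\varphi)$) I would first apply $P$ to $\psi^{\otimes n}$. By the AEP this succeeds with probability $1-o(1)$, the normalised post-measurement state $\tilde\psi$ has fidelity $1-o(1)$ with $\psi^{\otimes n}$, and its decohered spectrum $\{p_{\vec\imath}\}_{\vec\imath\in\cT}$ is squeezed into the window $2^{-n(C(\psi)\pm\delta)}$; thus $\Delta(\tilde\psi)$ is close to uniform on $\approx 2^{nC(\psi)}$ levels. Applying the same truncation to the target $\varphi^{\otimes nR}$ yields a state whose decohered part is near-uniform on $\approx 2^{nRC(\varphi)}$ levels. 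Since majorization between near-uniform distributions is governed simply by the comparison of support sizes, and $RC(\varphi)<C(\psi)$ leaves an exponential margin, for small enough $\delta$ the truncated target's decohered distribution majorizes that of $\tilde\psi$, so Theorem~\ref{thm:majorization} supplies an incoherent operation realising the conversion up to an accumulated error tending to $0$. Conceptually this is the composition of \emph{concentration} $\psi^{\otimes n}\mapsto\Phi_2^{\otimes nC(\psi)}$ with \emph{dilution} $\Phi_2^{\otimes m}\mapsto\varphi^{\otimes m/C(\varphi)}$, the exactly-uniform intermediary $\Phi_2^{\otimes m}$ making both majorization comparisons transparent.

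For the converse ($R>C(\psi)/C(\varphi)$) I would use the relative entropy of coherence $C_r(\rho)=S(\Delta(\rho))-S(\rho)$, which is an incoherent monotone~\cite{BCP14}, is additive, and reduces to $C(\psi)$ on pure states. If $T$ is incoherent with $\sigma_n=T(\psi^{\otimes n})$ and $F(\sigma_n,\varphi^{\otimes nR})\ge 1-\epsilon$, then monotonicity gives $C_r(\sigma_n)\le C_r(\psi^{\otimes n})=nC(\psi)$, while asymptotic continuity (Fuchs--van de Graaf followed by a Fannes-type bound on dimension $\approx d^{nR}$) gives $C_r(\sigma_n)\ge nR\,C(\varphi)-nf(\epsilon)-o(n)$ with $f(\epsilon)\to 0$. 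Combining these and letting $n\to\infty$, $\epsilon\to 0$ forces $R\le C(\psi)/C(\varphi)$.

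The main obstacle is the achievability step: making the near-uniform majorization rigorous. One cannot majorize the full $\Delta(\psi)^{\otimes n}$ by a uniform state of the right size --- that would be controlled by the far smaller min-entropy --- so the typical-subspace truncation, and the bookkeeping ensuring that the small probability and fidelity defects incurred there do not spoil the \emph{exact} majorization hypothesis of Theorem~\ref{thm:majorization}, is where the real care is needed. The converse is comparatively routine once the monotonicity, additivity, and asymptotic continuity of $C_r$ are in hand.
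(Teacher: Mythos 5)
Your proposal is correct, and the achievability half is essentially the paper's own argument: project onto a typical set (a type-class measurement in the paper, an entropy-typical projection in your version), observe that the truncated source is (near-)maximally coherent on $\approx 2^{nC(\psi)}$ levels while the truncated target is supported on $\leq 2^{nR(C(\varphi)+\delta)}$ levels, and invoke Theorem~\ref{thm:majorization}; routing the conversion through the exactly uniform $\Phi_2^{\otimes m}$, as you suggest, is precisely how the paper sidesteps the ``near-uniform majorization'' worry you flag (any distribution on $\leq 2^m$ levels majorizes the flat one on $2^m$ levels, and the flat one on $2^m$ levels majorizes anything with max-probability $\leq 2^{-m}$).

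Where you genuinely diverge is the converse. The paper does \emph{not} use the $C_r$-monotone there: it argues by contradiction, composing a hypothetically too-good dilution with a too-good concentration to obtain $\Phi_2^{\ox n}\mapsto\Phi_2^{\ox n\widetilde R-o(n)}$ with $\widetilde R>1$, and then kills this with Lemma~\ref{lemma:rank-nonincrease} (the rank of $\Delta$ of a pure state cannot grow under incoherent Kraus operators), which forces every output branch to have exponentially small overlap with the larger maximally coherent state. That argument is elementary, self-contained, and gives a strong converse (exponentially vanishing fidelity). Your route --- monotonicity, additivity, and asymptotic continuity of $C_r=S(\Delta(\cdot))-S(\cdot)$, with Fuchs--van de Graaf and an Audenaert--Fannes bound on dimension $d^{nR}$ --- is the standard resource-theoretic weak converse; it is valid (none of those ingredients depends on Theorem~\ref{thm:asymptotic}, so there is no circularity; additivity is immediate from Eq.~(\ref{eq:C_r})) and it is in fact exactly the converse the paper deploys later for Theorem~\ref{thm:distillable}. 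It handles the general $\psi\to\varphi$ transformation directly rather than via composition with concentration, at the price of needing the continuity machinery and of yielding only a weak converse. Both suffice for the statement as given.
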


Here, $\rho\stackrel{1-\epsilon}{\approx}\sigma$ signifies 
that the two states have high fidelity:
$F(\rho,\sigma)\geq 1-\epsilon$ (see Appendix A for details).

Now we are ready to introduce two fundamental tasks for arbitrary mixed states,
namely asymptotic distillation of $\rho^{\ox n}$ to $\Phi_2^{\ox nR}$
and the reverse process of formation $\rho^{\ox n}$ from $\Phi_2^{\ox nR}$.
Note in this respect the fundamental importance of Theorem~\ref{thm:asymptotic},
which shows that we could equivalently elect any pure state $\psi$ 
(that is coherent) as unit reference for formation and distillation,
and all rates would change by the same factor $\frac{1}{C(\psi)}$. 
It turns out that both quantities have single-letter, additive expressions: 
the former is given by the \emph{relative entropy of coherence}, the 
latter by the \emph{coherence of formation}; both are additive.
This is in marked contrast to other resource theories, perhaps
most prominently that of entanglement under LOCC, in which the
basic operational tasks are only characterized by regularized
formulas, and the fundamental quantities, such as 
entanglement of formation~\cite{BDSW}, 
relative entropy of entanglement~\cite{relent}, etc,
are not additive~\cite{VollbrechtWerner,Hastings}.

\medskip
{\it Distillable coherence.}---
The distillation process is the process that extracts pure coherence
from a mixed state by incoherent operations. The distillable coherence 
of a state is the maximal rate at which $\Phi_2$ can be obtained from the 
given state. The precise definition is the following.

\begin{definition}
  The distillable coherence of a state $\rho$ is defined as
  \[
    C_d(\rho)=\sup R, \text{ s.t. }
    \rho^{\ox n} \stackrel{\text{IC}}{\longmapsto} \,\,\stackrel{1-\epsilon}\approx \Phi_2^{\ox nR}
    \text{ as } n\rightarrow\infty,\ \epsilon\rightarrow 0.
  \]  
\end{definition}

We look at the asymptotic setting to get rid of the error $\epsilon$,
motivated by general information theoretic practice \cite{Wilde-book,CoverThomas}
and the success of this point of view in the pure state case (Theorem~\ref{thm:asymptotic}).
By definition, $C_d$ naturally has an operational meaning as the optimal
rate performance at a natural task.
Theorem \ref{thm:distillable} shows that the distillable coherence is given 
by a closed form expression. 

\begin{theorem}
  \label{thm:distillable}
  For any state $\rho$, the distillable coherence is given by the
  \emph{relative entropy of coherence}:
  \[
    C_d(\rho) = C_r(\rho) = \min_{\sigma\in \Delta} S(\rho\|\sigma).
  \]
\end{theorem}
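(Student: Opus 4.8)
The plan is to prove the two inequalities $C_d(\rho)\le C_r(\rho)$ and $C_d(\rho)\ge C_r(\rho)$ separately, after first recording the explicit form of the relative entropy of coherence. Writing $S(\rho\|\sigma)=-S(\rho)-\tr\rho\log\sigma$ and restricting to diagonal $\sigma=\sum_i q_i\proj{i}$, the term $-\tr\rho\log\sigma=-\sum_i \bra{i}\rho\ket{i}\log q_i$ is a cross entropy, minimized over probability vectors at $q_i=\bra{i}\rho\ket{i}$. Hence the minimizer is $\sigma=\Delta(\rho)$ and
\[
  C_r(\rho) = S\bigl(\Delta(\rho)\bigr) - S(\rho).
\]
Two structural facts follow at once and will be used throughout: $C_r$ is \emph{additive}, $C_r(\rho\ox\tau)=C_r(\rho)+C_r(\tau)$, since both $S$ and $\Delta$ tensorize; and $C_r(\Phi_2^{\ox m})=m$.

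For the converse $C_d(\rho)\le C_r(\rho)$ I would invoke that $C_r$ is monotone (non-increasing) under incoherent operations --- established in \cite{BCP14} from monotonicity of the relative entropy under CPTP maps together with the fact that incoherent operations map $\Delta$ into itself --- and that $C_r$ is asymptotically continuous (a Fannes--Audenaert estimate for $S(\Delta(\cdot))-S(\cdot)$). Suppose $\rho^{\ox n}$ is transformed by an incoherent operation into a state $\tilde\Phi_n$ with $F(\tilde\Phi_n,\Phi_2^{\ox nR})\ge 1-\epsilon$. Monotonicity and additivity give $n\,C_r(\rho)=C_r(\rho^{\ox n})\ge C_r(\tilde\Phi_n)$, while asymptotic continuity together with $C_r(\Phi_2^{\ox nR})=nR$ gives $C_r(\tilde\Phi_n)\ge nR\bigl(1-\delta(\epsilon)\bigr)$ with $\delta(\epsilon)\to 0$. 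Dividing by $n$ and letting $n\to\infty$ and then $\epsilon\to 0$ yields $R\le C_r(\rho)$, hence $C_d(\rho)\le C_r(\rho)$.

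The achievability $C_d(\rho)\ge C_r(\rho)$ is the substantial part and requires an explicit incoherent protocol. My plan is a two-stage construction. First, flatten the diagonal: with $p_i=\bra{i}\rho\ket{i}$, the projector $\Pi$ onto the span of the $p$-typical computational basis strings $\{\ket{\mathbf i}\}$ is a sum of basis projectors, so the instrument $\{\Pi,\1-\Pi\}$ is incoherent; it succeeds with probability $\tr(\Pi\rho^{\ox n})\to 1$, and by the gentle-measurement lemma the output $\tilde\rho_n=\Pi\rho^{\ox n}\Pi/\tr(\cdot)$ satisfies $\tilde\rho_n\approx\rho^{\ox n}$ in trace norm, so $S(\tilde\rho_n)\approx nS(\rho)$, while its diagonal, now supported and nearly uniform on $\approx 2^{nS(\Delta\rho)}$ strings, gives $S(\Delta\tilde\rho_n)\approx nS(\Delta\rho)$. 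Second, resolve $\tilde\rho_n$ into near-pure maximally coherent blocks: perform a further incoherent measurement with $\approx 2^{nS(\rho)}$ outcomes such that, conditioned on each outcome, the post-measurement state is $\epsilon$-close to a maximally coherent pure state of dimension $\approx 2^{n(S(\Delta\rho)-S(\rho))}$. Feeding each such pure state into the concentration direction of Theorem~\ref{thm:asymptotic} extracts $\approx n\bigl(S(\Delta\rho)-S(\rho)\bigr)=nC_r(\rho)$ copies of $\Phi_2$.

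The main obstacle is the second stage: constructing an incoherent instrument whose conditional outputs are genuinely maximally coherent pure states of the stated dimension, i.e.\ exhibiting the mixedness $S(\rho)$ of $\tilde\rho_n$ as recoverable classical side information rather than as irretrievably lost coherence. I expect to handle this by an AEP/typicality argument: passing to the eigenbasis of $\rho$ one controls the conditionally typical structure of $\rho^{\ox n}$, while the flat diagonal secured in the first stage forces each conditional branch to look maximally coherent to leading order, effectively factoring the typical space into a ``which-branch'' part of size $\approx 2^{nS(\rho)}$ and a coherent part of size $\approx 2^{n(S(\Delta\rho)-S(\rho))}$. Quantifying the fidelity of each branch to an exact $\Phi$ of the right dimension, and verifying that the branch labelling can be produced by incoherent Kraus operators of the form $\sum_j\ketbra{j}{\gamma_j}$, are the steps demanding care; everything else reduces to the pure-state rates furnished by Theorem~\ref{thm:asymptotic} and to the additivity and continuity estimates already used in the converse.
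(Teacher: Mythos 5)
Your converse argument is exactly the paper's: additivity, monotonicity and asymptotic continuity of $C_r$ give $C_d(\rho)\le C_r(\rho)$, and your derivation of the closed form $C_r(\rho)=S(\Delta(\rho))-S(\rho)$ is fine. The gap is in the achievability direction, and it sits precisely where you flag ``the main obstacle'': you reduce everything to the existence of an incoherent instrument with $\approx 2^{nS(\rho)}$ outcomes whose conditional outputs are each close to a maximally coherent state of dimension $\approx 2^{nC_r(\rho)}$, but you do not construct it, and the route you sketch --- ``passing to the eigenbasis of $\rho$'' and invoking conditional typicality --- does not obviously go through, because the eigenbasis of $\rho$ is unrelated to the incoherent basis, and every Kraus operator you are allowed to use must have the form $\sum_j\ketbra{j}{\gamma_j}$ with the $\ket{\gamma_j}$ supported on a partition of the \emph{computational} basis. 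The whole difficulty is to exhibit such a partition of the typical computational-basis strings into blocks on which the state is (after an allowed relabelling) nearly pure and maximally coherent; flattening the diagonal in stage one does not by itself ``force each conditional branch to look maximally coherent''.

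The paper supplies the missing mechanism as follows. Take a purification $\ket{\phi}^{AE}=\sum_i\sqrt{p_i}\ket{i}^A\ket{\phi_i}^E$ of $\rho$, perform the (incoherent) type measurement on $A^n$ to obtain $\ket{\phi(Q)}^{AE}\propto\sum_{i^n\in T(Q)}\ket{i^n}\ket{\phi_{i^n}}$, and then apply a covering lemma (Lemma~\ref{lemma:covering}, an operator-Chernoff bound for sampling without replacement, adapted from Devetak--Winter) to show that a partition of $T(Q)$ into $M=|T(Q)|/S$ blocks of size $S\approx 2^{nI(A:E)}=2^{nS(\rho)}$ exists for which almost all blocks have nearly identical reduced states on $E$. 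Uhlmann's theorem then yields unitaries $U_m$ acting within each block that align the corresponding branch vectors, and the Kraus operators $K_s=\sum_m\proj{m}\ox\ketbra{0}{s}U_m$ are incoherent of the general form above and output (approximately) the state $\frac{1}{\sqrt{M}}\sum_m\ket{m}$ with $\frac1n\log M\to S(\Delta(\rho))-S(\rho)$. This decoupling-from-the-environment step is the substantive content of the direct part; without it (or an equivalent construction) your proof is incomplete.
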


The relative entropy of coherence is introduced 
and studied in detail in \cite{Aaberg:superposition,BCP14}.
Here, $S(\rho\|\sigma)=\tr\rho(\log\rho-\log\sigma)$ is the quantum 
relative entropy. In \cite{Aaberg:superposition,BCP14}, it is shown that
\begin{equation}
\label{eq:C_r}
  C_r(\rho) = S\bigl(\Delta(\rho)\bigr) - S(\rho).
\end{equation}
Furthermore, that the relative entropy of coherence 
is convex in the state, and a coherence monotone, meaning
that for every incoherent transformation $T(\rho) = \sum_\ell K_\ell\rho K_\ell^\dagger$,
$C_r(\rho) \geq C_r\bigl( T(\rho) \bigr)$. In fact, it is even 
\emph{strongly monotonic}~\cite{BCP14}:
\[
  C_r(\rho) \geq \sum_\ell p_\ell C_r(\rho_\ell),
  \text{ where } p_\ell\rho_\ell = K_\ell\rho K_\ell^\dagger.
\]

But it is only due to Theorem \ref{thm:distillable} that we
can give it a clear operational interpretation in the distillation process. 
Note that in concurrent independent work, Singh \emph{et al.}~\cite{Pati-et-al}
show that the same quantity, $C_r(\rho)$, by virtue of it being equal
to the entropy difference $S\bigl(\Delta(\rho)\bigr) - S(\rho)$, 
arises also as the minimum amount of incoherent noise that has to be applied
to the state to decohere it.

\medskip
{\it Coherence cost.}---
The formation process is that which prepares a mixed state by 
consuming pure coherent states under incoherent operations. 
The coherence cost is the minimal rate at which  
$\Phi_2$ has to be consumed for preparing the given state. 

\begin{definition}
  The coherence cost of a state $\rho$ is defined as 
  \[
  C_c(\rho) = \inf R, \text{ s.t. }
                  \Phi_2^{\ox nR} \stackrel{\text{IC}}{\longmapsto} \,\,\stackrel{1-\epsilon}\approx \rho^{\ox n}
              \text{ as } n\rightarrow\infty,\ \epsilon\rightarrow 0.
  \]
\end{definition}

The next result shows that the coherence cost has a single-letter 
formula, involving a simple entropy optimization.

\begin{theorem}
  \label{thm:cost}
  For any state $\rho$, the coherence cost is given by the
  coherence of formation,
  \[
    C_c(\rho) = C_f(\rho),
  \]
  where the coherence of formation is given by 
  \[
    C_f(\rho) = \min \sum_i p_i S(\Delta(\psi_i)) \text{ s.t. } \rho = \sum_i p_i \ketbra{\psi_i}{\psi_i}.
  \]
\end{theorem}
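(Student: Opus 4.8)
The plan is to prove $C_c(\rho) = C_f(\rho)$ by establishing the two inequalities separately, following the classic entanglement-cost template of Hayden/Horodecki/Terhal.

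\medskip
\textbf{Achievability ($C_c \leq C_f$).}
First I would show that the rate $C_f(\rho)$ suffices for formation. Let $\rho = \sum_i p_i \proj{\psi_i}$ be an optimal decomposition achieving $C_f(\rho)$. The idea is to prepare $\rho^{\ox n}$ by first classically sampling a typical sequence $i^n = i_1\cdots i_n$ according to the product distribution $p_{i^n} = \prod_k p_{i_k}$, and then producing the corresponding pure state $\psi_{i^n} = \bigotimes_k \psi_{i_k}$. Sampling a classical index is incoherent (it only involves incoherent flag states), so the whole cost lies in creating the pure states $\psi_{i^n}$ from $\Phi_2$'s. By Theorem~\ref{thm:asymptotic} (coherence dilution), a block $\psi_{i^n}$ can be created from roughly $C(\psi_{i^n}) = \sum_k S(\Delta(\psi_{i_k}))$ copies of $\Phi_2$. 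By the law of large numbers, for typical $i^n$ this is close to $n\sum_i p_i S(\Delta(\psi_i)) = n\,C_f(\rho)$. Averaging over the (incoherently generated) distribution of indices, and forgetting the index register, yields a state close to $\rho^{\ox n}$ in fidelity, at coherence cost approaching $C_f(\rho)$. I would need to handle the dilution uniformly across the typical indices --- dilution for different pure states consumes different amounts, so the cleanest route is to dilute each block of identical letters $\psi_i$ separately (there are $\approx np_i$ of each by typicality) and sum the costs.

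\medskip
\textbf{Converse ($C_c \geq C_f$).}
For the lower bound I would use that $C_f$ is a \emph{strong coherence monotone} that is \emph{asymptotically continuous} and \emph{subadditive}, then run the standard regularization argument. Concretely, suppose an incoherent operation takes $\Phi_2^{\ox nR}$ to a state $\sigma_n$ with $F(\sigma_n, \rho^{\ox n}) \geq 1-\epsilon$. Monotonicity of $C_f$ under incoherent operations gives $nR = C_f(\Phi_2^{\ox nR}) \geq C_f(\sigma_n)$, using $C_f(\Phi_2^{\ox m}) = m$ (since $\Phi_2^{\ox m}$ is pure with $C(\Phi_2^{\ox m}) = m$). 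Asymptotic continuity then lets me replace $C_f(\sigma_n)$ by $C_f(\rho^{\ox n})$ up to an $o(n)$ term, and finally I divide by $n$ and invoke subadditivity together with Fekete's lemma to pass from the regularized quantity $\lim_n \frac1n C_f(\rho^{\ox n})$ back to the single-copy $C_f(\rho)$. This last point is the crux: I must verify that $C_f$ is in fact \emph{additive}, $C_f(\rho^{\ox n}) = n\,C_f(\rho)$, so that the regularization collapses to the single-letter formula claimed in the theorem.

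\medskip
\textbf{Main obstacle.}
The hard part will be proving additivity of the coherence of formation. Subadditivity $C_f(\rho\ox\tau) \leq C_f(\rho) + C_f(\tau)$ is immediate from tensoring optimal decompositions, but superadditivity is not automatic --- a global optimal decomposition of $\rho^{\ox n}$ need not factorize, and the analogous property notoriously \emph{fails} for entanglement of formation. I expect the proof to exploit special structure of coherence: because $\Delta$ acts locally and $S(\Delta(\psi))$ for a product pure state splits as a sum, one can likely lower-bound the decomposition entropy of any global ensemble by the sum of local contributions, using a chain-rule / conditioning argument on the decohered registers. Establishing this superadditivity (equivalently, that the single-letter $C_f$ already equals its own regularization) is what makes the single-letter formula correct and is where the real work lies.
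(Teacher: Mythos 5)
Your overall architecture is the same as the paper's: achievability by sampling a frequency-typical index sequence incoherently, grouping the letters and diluting each block $\psi_j^{\ox \approx np_j}$ separately via Theorem~\ref{thm:asymptotic}, then taking the convex combination; converse by the standard chain monotonicity $\to$ asymptotic continuity $\to$ additivity. Both halves are sound as sketched, and your recognition that \emph{super}additivity (not mere subadditivity plus Fekete) is the crux of the converse is exactly right.

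The one substantive point of divergence is how that crux is handled. The paper does not prove additivity inside the proof of this theorem at all: it is outsourced to Theorem~\ref{additivity}, which is proved by the observation that $C_f(\rho)=E_f\bigl(\CNOT(\rho\ox\proj{0})\CNOT^\dagger\bigr)$ maps the problem onto the entanglement of formation of maximally correlated states, whose additivity is a known result of Vidal, D\"ur and Cirac. You instead gesture at a direct chain-rule argument, which you leave unexecuted. For the record, your sketch can be completed and is arguably more self-contained: for any pure state $\Psi$ in a decomposition of $\rho_{AB}$, the decohered state $\Delta_{AB}(\Psi)$ is a joint distribution $P_{XY}$ with $H(XY)=H(X)+H(Y|X)$; here $H(X)=S\bigl(\Delta_A(\tr_B\Psi)\bigr)\geq C_f(\tr_B\Psi)$ (since $C_f(\sigma)\leq S(\Delta(\sigma))$ by concavity of entropy), and $H(Y|X)=\sum_x P_x S\bigl(\Delta_B(\phi_x)\bigr)\geq C_f(\tr_A\Psi)$ because the conditional states $\phi_x$ on $B$ are pure and form an ensemble for $\tr_A\Psi$; averaging over the decomposition and using convexity of $C_f$ gives $C_f(\rho_{AB})\geq C_f(\rho_A)+C_f(\rho_B)$. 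Either route works; the paper's buys the result from entanglement theory, yours keeps everything inside coherence theory but must be written out. A secondary gap of the same kind: you assert asymptotic continuity of $C_f$ without justification, whereas the paper derives it from the same maximally-correlated-state correspondence and the known continuity bound for $E_f$; if you avoid that correspondence for additivity you still need some argument here.
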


The coherence of formation is introduced as a monotone in \cite{Aaberg:superposition}, 
where its convexity and monotonicity under incoherent operations were
observed (see Lemma~\ref{lemma:CoF-monotone} in Appendix A),
however its additivity was not remarked. A priori, one might have expected the cost to
be given by the regularization of the coherence of formation, which would have
involved infinitely many optimization problems so that the evaluation of the 
operational cost had become infeasible.
Indeed it is additivity that makes the single-letter formulas 
in Theorem \ref{thm:distillable} and Theorem \ref{thm:cost} available,
and because of its importance we record it as a theorem.

\begin{theorem}
  \label{additivity}
  Both $C_f$ and $C_r$ are additive under tensor products,
  \begin{align*}
    C_f(\rho\ox\sigma) &= C_f(\rho) + C_f(\sigma), \\
    C_r(\rho\ox\sigma) &= C_r(\rho) + C_r(\sigma).
  \end{align*}
\end{theorem}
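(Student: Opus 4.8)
The plan is to prove additivity of each functional separately, handling the two inequalities in each case by different means. Recall that $C_r(\rho) = S(\Delta(\rho)) - S(\rho)$ by equation~\eqref{eq:C_r}. Since $\Delta(\rho\ox\sigma) = \Delta(\rho)\ox\Delta(\sigma)$ (the decohering map acts coordinate-wise on the tensor-product basis) and the von Neumann entropy is additive on tensor products, additivity of $C_r$ is immediate:
\[
  C_r(\rho\ox\sigma)
    = S\bigl(\Delta(\rho)\ox\Delta(\sigma)\bigr) - S(\rho\ox\sigma)
    = C_r(\rho) + C_r(\sigma).
\]
So the substance of Theorem~\ref{additivity} is entirely in the additivity of $C_f$.

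For $C_f$, the inequality $C_f(\rho\ox\sigma) \leq C_f(\rho) + C_f(\sigma)$ is the easy (subadditivity) direction. First I would take optimal decompositions $\rho = \sum_i p_i \proj{\psi_i}$ and $\sigma = \sum_j q_j \proj{\phi_j}$ achieving $C_f(\rho)$ and $C_f(\sigma)$, form the product decomposition $\rho\ox\sigma = \sum_{ij} p_i q_j \proj{\psi_i\ox\phi_j}$, and use that $\Delta(\psi_i\ox\phi_j) = \Delta(\psi_i)\ox\Delta(\phi_j)$ together with entropy additivity to get $\sum_{ij} p_i q_j S(\Delta(\psi_i\ox\phi_j)) = C_f(\rho) + C_f(\sigma)$; since this is one particular decomposition of the product state, the minimum defining $C_f(\rho\ox\sigma)$ can only be smaller or equal.

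The hard part is the reverse inequality (superadditivity), $C_f(\rho\ox\sigma) \geq C_f(\rho) + C_f(\sigma)$, which is exactly the place where analogous functionals such as the entanglement of formation fail to be additive. Here one cannot simply split an optimal decomposition of the product state, because its pure-state members need not be product vectors. The cleanest route I expect is to prove a lower bound on $C_f$ that is manifestly additive and matches the construction above. The natural candidate is
\[
  C_f(\rho) \geq \min_{\sigma\in\Delta} S(\rho\|\sigma) = C_r(\rho),
\]
but since $C_r$ equals $C_d$ and $C_f = C_c \geq C_d$, this alone is too weak to give equality. Instead I would try to establish that the pure-state coherence entropy $C(\psi) = S(\Delta(\psi))$ is the relevant additive quantity at the single-copy level and argue that convex roof minimization interacts well with the tensor structure via a global decohering argument: decohere the product into blocks and use the grouping/chain-rule property of entropy together with strong monotonicity of $C_r$. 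Concretely, I would aim to show that any decomposition of $\rho\ox\sigma$ can be post-processed (via an incoherent measurement that reads out local block indices on each factor) into a decomposition no worse in cost, from which the factorization of the optimum follows. The principal obstacle is controlling cross-correlations between the two factors in an arbitrary (entangled-across-$A\!:\!B$ in the coherence sense) decomposition; overcoming it will rely on the fact that the coherence-relevant data is carried entirely by the diagonal blocks, so that tracing out or dephasing one factor cannot decrease the coherence cost of the other. If this block-decoupling step goes through, combining it with the easy subadditivity direction yields the claimed equality for $C_f$, completing the theorem.
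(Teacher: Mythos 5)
Your treatment of $C_r$ is exactly the paper's (immediate from $C_r=S(\Delta(\rho))-S(\rho)$ and additivity of entropy under tensor products), and your subadditivity argument for $C_f$ via product decompositions is correct. But the superadditivity direction $C_f(\rho\ox\sigma)\geq C_f(\rho)+C_f(\sigma)$ --- which, as you yourself note, is the entire substance of the theorem --- is not proved: your text ends with a conditional plan (``if this block-decoupling step goes through''), and the step it hinges on is precisely the hard part. What one actually needs is a statement of the form: for \emph{every} pure state $\ket{\Psi}$ appearing in a decomposition of $\rho\ox\sigma$ (and such $\ket{\Psi}$ is in general not a product vector), $S(\Delta(\Psi))\geq C_f(\tr_B\proj{\Psi})+C_f(\tr_A\proj{\Psi})$, i.e.\ strong superadditivity of the convex roof on pure states. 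Your proposed mechanism --- ``dephasing one factor cannot decrease the coherence cost of the other'' --- does not deliver this: dephasing factor $B$ of a pure $\Psi$ yields a mixed state on $AB$, and monotonicity of $C_f$ under that incoherent map only gives an \emph{upper} bound on the coherence of formation of the output, which is the wrong direction; nor does the diagonal of $\Psi$ alone determine the reduced states whose $C_f$ you need to lower-bound.

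The paper closes this gap by a different route: it identifies $C_f(\rho)$ with the entanglement of formation of the maximally correlated state $\CNOT(\rho\ox\proj{0})\CNOT^\dagger=\sum_{ij}\rho_{ij}\ketbra{ii}{jj}$, using that every pure state in the support of a maximally correlated state has Schmidt basis $\{\ket{ii}\}$ so that its entanglement entropy equals the entropy of its diagonal, and then invokes the result of Vidal, D\"ur and Cirac that $E_f$ is additive on this class. If you want a self-contained argument instead of the citation, you would need to prove the pure-state strong superadditivity inequality above directly (for instance via the conditional-entropy lower bound $E_f\geq S(B)-S(AB)$ on maximally correlated states, which is additive and tight there); as it stands, your proposal identifies the right obstacle but does not overcome it.
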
  

Theorem \ref{additivity} means that also the distillable coherence and the
coherence cost are additive, and so there are no super-additivity
or activation phenomena in the resource theory of coherence, unlike
the theory of entanglement~\cite{VollbrechtWerner,Hastings}, that of communication via
channels~\cite{Hastings,SmithYard}, and many other resource theories,
where the yield (cost) of two resources together may be strictly larger 
(smaller) than the sum of the yields (costs) of the resources
processed individually.

Based on the formulas for the distillable coherence and the coherence cost, 
we are now ready to characterize precisely the (ir-)reversible states.

\medskip
{\it (Ir-)reversibility.}--- From the definitions of $C_d$ and $C_c$, it 
is immediate that $C_d(\rho)\le C_c(\rho)$. A state is reversible if the 
equality holds, otherwise it is called irreversible. 
From Theorem~\ref{thm:asymptotic}, we see that 
all pure states are asymptotically reversible, the optimal transformation rate given by the
ratio of their entropies of coherence. In the mixed state case, 
Theorem \ref{thm:criterion} provides a simple criterion to decide whether 
the given state is reversible or not that  completely characterizes 
all the reversible states. From this, we conclude that the mixed states 
are generically irreversible and the coherence theory is an irreversible 
resource theory. However in contrast to the irreversibility
in entanglement theory~\cite{HHH-bound,VC-irr,Yang2005}, there is no 
``bound coherence'' (as an analogue of ``bound entanglement''~\cite{HHH-bound,Yang2005}), 
from which no coherence could be distilled, but for which, in order to create it, 
nonzero coherence would be required. 

\begin{theorem}
\label{thm:criterion}
A mixed state is reversible if and only if its eigenvectors are supported on the 
orthogonal subspaces spanned by a partition of the incoherent basis. 
That is,
\[
  \rho = \bigoplus_j p_j \proj{\phi_j},
\]
and each $\ket{\phi_j} \in{\cal H}_j =\operatorname{span}\{\ket{i}: i\in S_j\}$,
such that $S_j\cap S_k=\emptyset$ for all $j\neq k$ and
$\cH = \bigoplus_j \cH_j$. In other words, $\rho$ is reversible if and only if
there exists an incoherent projective measurement, consisting in fact
of the projectors $P_j$ onto ${\cal H}_j$, s.t.~$\rho = \sum_j P_j \rho P_j$,
and $P_j \rho P_j$ is proportional to a pure state.
\end{theorem}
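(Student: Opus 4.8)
The plan is to characterize reversibility via the equality $C_d(\rho) = C_c(\rho)$, using the single-letter formulas $C_d = C_r$ (Theorem~\ref{thm:distillable}) and $C_c = C_f$ (Theorem~\ref{thm:cost}). Since we always have $C_r(\rho) \leq C_f(\rho)$, the state $\rho$ is reversible precisely when $C_r(\rho) = C_f(\rho)$. I would therefore reduce the entire problem to characterizing when this entropic equality holds, and then show that equality is equivalent to the stated block-diagonal structure. The easy direction is sufficiency: if $\rho = \bigoplus_j p_j \proj{\phi_j}$ with each $\ket{\phi_j}$ supported in a block $\cH_j = \operatorname{span}\{\ket{i}: i\in S_j\}$ of a partition, then the decomposition $\{p_j, \phi_j\}$ is already orthogonal, so $C_f(\rho) \leq \sum_j p_j S(\Delta(\phi_j))$; and because the blocks are incoherent-orthogonal, a direct computation of $C_r(\rho) = S(\Delta(\rho)) - S(\rho)$ using $S(\rho) = H(\vec p)$ and $S(\Delta(\rho)) = H(\vec p) + \sum_j p_j S(\Delta(\phi_j))$ shows $C_r(\rho)$ equals the same sum, forcing equality.

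The main work is the necessity direction, which I expect to be the crux. Here I would start from a generic optimal decomposition $\rho = \sum_i p_i \proj{\psi_i}$ achieving $C_f(\rho) = \sum_i p_i S(\Delta(\psi_i))$, and analyze the chain of inequalities that connects $C_f$ to $C_r$. The natural tool is to write
\[
  C_f(\rho) = \sum_i p_i S\bigl(\Delta(\psi_i)\bigr)
            \geq S\Bigl(\sum_i p_i \Delta(\psi_i)\Bigr) - H(\vec p) + \text{(correction)},
\]
or more cleanly to compare $\sum_i p_i S(\Delta(\psi_i))$ with $S(\Delta(\rho)) - S(\rho)$ via concavity of entropy and the data-processing/grouping behavior of $\Delta$. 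The key observation is that $\Delta(\rho) = \sum_i p_i \Delta(\psi_i)$, so by concavity of the von Neumann entropy, $S(\Delta(\rho)) \geq \sum_i p_i S(\Delta(\psi_i))$ with a defect equal to the Holevo-type quantity $\chi = S(\Delta(\rho)) - \sum_i p_i S(\Delta(\psi_i))$. Reversibility $C_f = C_r$ then reads $\sum_i p_i S(\Delta(\psi_i)) = S(\Delta(\rho)) - S(\rho)$, i.e.\ $S(\rho) = \chi$. Since $S(\rho) \leq H(\vec p)$ always, and $\chi \leq H(\vec p)$ (Holevo), I would exploit the two equality conditions simultaneously: $S(\rho) = H(\vec p)$ forces the $\proj{\psi_i}$ to be mutually orthogonal, and $\chi = S(\Delta(\rho)) - \sum_i p_i S(\Delta(\psi_i))$ hitting its extreme value forces the \emph{dephased} states $\Delta(\psi_i)$ to have mutually orthogonal supports in the incoherent basis.

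From these two orthogonality conclusions I would finish as follows: the orthogonality of the $\ket{\psi_i}$ means $\{p_i,\psi_i\}$ is (up to grouping) the eigen-decomposition of $\rho$, so the $\ket{\psi_i}$ may be taken to be the eigenvectors $\ket{\phi_j}$; the orthogonality of the diagonal supports $\csupp \Delta(\psi_i)$ means the index sets $S_i = \{k : \braket{k}{\psi_i}\neq 0\}$ are pairwise disjoint, which is exactly a partition $[d] = \dot\bigcup_j S_j$ with $\ket{\phi_j}\in\cH_j = \operatorname{span}\{\ket{i}:i\in S_j\}$. The equivalence with the projective-measurement formulation is then immediate: the $P_j$ are the projectors onto these orthogonal incoherent blocks, $\rho = \sum_j P_j\rho P_j$ because $\rho$ is block-diagonal, and $P_j\rho P_j = p_j\proj{\phi_j}$ is proportional to a pure state. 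The hard part is making the two equality analyses rigorous and extracting the disjoint-support conclusion cleanly; I would handle possible degeneracies in $\vec p$ by grouping equal-weight eigenvectors, and treat the general (non-strict) majorization conditions by invoking the known equality cases of concavity of entropy and of the Holevo bound, both of which pin down orthogonal supports.
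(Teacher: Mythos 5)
Your sufficiency direction is fine, and your reformulation of reversibility as equality in the data-processing inequality is the right starting point: for any pure-state decomposition $\rho=\sum_i p_i\proj{\psi_i}$ one indeed has $\chi:=S(\Delta(\rho))-\sum_i p_i S(\Delta(\psi_i))\leq S(\rho)$, and reversibility means this is tight for an optimal decomposition. The gap is in what you do with that equality. From $S(\rho)=\chi$, $S(\rho)\leq H(\vec p)$ and $\chi\leq H(\vec p)$ you conclude that \emph{both} bounds by $H(\vec p)$ are saturated; but two quantities that are equal to each other and each bounded above by a third need not attain that bound. Nothing in your argument rules out $\chi=S(\rho)<H(\vec p)$, so you never legitimately reach the equality cases of mixing-entropy concavity and of the Holevo bound on which your entire extraction of orthogonality rests. (The conclusion that an optimal decomposition can be taken orthogonal with disjoint diagonal supports is exactly what the theorem asserts, so assuming $S(\rho)=H(\vec p)$ at this stage is essentially circular. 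There is also a secondary issue that optimal decompositions can repeat ensemble members, so $S(\rho)=H(\vec p)$ cannot hold for \emph{every} optimal decomposition; but the main problem is the invalid inference.)

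The missing ingredient is the structure theorem for states that saturate the data-processing inequality (equivalently, strong subadditivity with equality, Hayden--Jozsa--Petz--Winter): the condition $\chi=S(\rho)$ by itself does not reduce to the two elementary equality cases you cite, and the only known way to turn it into a direct-sum-of-tensor-products decomposition is that theorem. This is precisely the route the paper takes: it maps $\rho$ to the maximally correlated state $\sigma^{AB}=\CNOT(\rho\otimes\proj{0})\CNOT^\dagger$, observes that reversibility becomes $S(\sigma^B)-S(\sigma^{AB})=E_f(\sigma^{AB})$, and invokes Lemma~\ref{lemma:WY} (which rests on the HJPW structure theorem) to obtain $\sigma^{AB}=\bigoplus_j p_j\proj{\phi_j}^{AB}$ with $\phi_j^B\perp\phi_k^B$, before undoing the $\CNOT$. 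To repair your proof you would have to import that same lemma (or the Petz recovery-map equality condition) at the point where you currently assert the two saturations; the elementary tools you list are not enough.
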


\medskip
Note that the criterion is easy to check, and contrast this with the entanglement 
irreversibility of two-qubit maximally correlated states considered in 
\cite{Vidal-additivity}, where Wootters' formula \cite{Wootters} for 
calculating the entanglement of formation is used: Here we do not need the explicit 
formula of the coherence of formation, which involves an optimization problem itself, and 
indeed we do not know a closed-form expression for it in high dimension. However, here the 
equality constraint is so severe that we can learn the structure of the state. 

\begin{theorem}
\label{bound}
There is no bound coherence: $C_d(\rho)=0$ implies $C_c(\rho)=0$.
In other words, every state with any coherence 
(non-zero off-diagonal part) is distillable.
\end{theorem}

\medskip
{\it Discussion.}---
We have shown that the incoherent operations
proposed by Baumgratz \emph{et al.}~\cite{BCP14} give
rise to a well-behaved operational resource theory of coherence.
Remarkably, almost all basic questions in this resource theory have simple answers. 
We saw that it is a theory without bound coherence, but
exhibiting generic irreversibility for transformations between
pure and mixed states. This should be contrasted with the
general abstract framework of Brand\~{a}o and Gour~\cite{BrandaoGour},
which applies to the present theory of coherence, but rather
than the incoherent operations considered here, requires all
cptp maps $\mathcal{E}$ such that the weaker condition
$\mathcal{E}(\Delta) \subset \Delta$ holds. 
Both the distillable coherence and the coherence cost under this
relaxed premise become $C_r(\rho)$, meaning that while we cannot distill
more efficiently with this broader class of operations, 
formation becomes cheaper.

One curious observation, for which we would like to have a 
strictly operational interpretation, is that the resulting
theory of coherence resembles so closely the theory of \emph{maximally
correlated} entangled states. Indeed, under the correspondence
\[
  \rho = \sum_{ij} \rho_{ij} \ketbra{i}{j} \longleftrightarrow
  \widetilde{\rho} = \sum_{ij} \rho_{ij} \ketbra{ii}{jj},
\]
$C(\psi)$ is identified with $E(\widetilde{\psi})$, the entropic measure
of pure entanglement, $C_c(\rho)=C_f(\rho)$ is identified with
the entanglement cost (which equals the entanglement of formation
for these states), $E_c(\widetilde{\rho})=E_f(\widetilde{\rho})$,
and $C_d(\rho)=C_r(\rho)$ is identified with the distillable entanglement
(which equals the relative entropy of entanglement for these states),
$E_d(\widetilde{\rho})=E_r(\widetilde{\rho})$~\cite{Rains:PPT}. 
Indeed, this answers all the basic asymptotic questions in the theory, 
which is much simpler than general entanglement theory. 
What is missing to elevate
this correspondence from an observation to a theoretical
explanation (cf.~Streltsov \emph{et al.}~\cite{Adesso-et-al}) is a matching
correspondence between incoherent operations and LOCC operations.
That would truly show that the two theories are equivalent,
relieving us of the need to prove any of the optimal conversion rates 
reported in this Letter, which now we have to do by manually adapting
the entanglement manipulation protocols. 
A notable gap in the above correspondence is the non-asymptotic 
theory of pure states: The diagonal entries of a pure state
correspond to the Schmidt coefficients of the associated pure entangled state,
and by Nielsen's theorem~\cite{Nielsen:majorization} a pure state
can be transformed by LOCC into another one if and only if the
Schmidt vectors are in majorization relation. In the case of 
incoherent operations on pure states, we only know that majorization is
sufficient for transformability but not whether it is necessary.

To study this correspondence further, it might be worth investigating the
optimal conversion rates $R$ of incoherent transformations
\[
  \rho^{\ox n} \stackrel{\text{IC}}{\longmapsto} \,\,\stackrel{1-\epsilon}\approx \sigma^{\ox nR}
\]
for general (mixed) states, and which for the moment we can only
bound using the existing monotones:
\[
  \frac{C_r(\rho)}{C_f(\sigma)} \leq R 
          \leq \min \left\{ \frac{C_r(\rho)}{C_r(\sigma)}, \frac{C_f(\rho)}{C_f(\sigma)} \right\}.
\]

Given the close resemblance to entanglement theory, we may
expect that one-shot coding theorems as well as finite block length
analyses can be carried out, but we have refrained from entering this
domain to maintain the simplicity of the asymptotic picture. 
As in one-shot information theory~\cite{Tomamichel}, 
we expect that min- and max-entropies
and relative entropies and R\'{e}nyi (relative) entropies govern
the optimal rates, which now would carry an explicit dependence on
the protocol error.

It remains to be seen whether similarly complete theories of asymptotic
operational transformations can be carried out for other resource theories, 
such as that of reference frames~\cite{frame}. Observe that as reference 
frame theories are built on group actions under which the free states 
are precisely the invariant ones, the present theory of coherence may 
be viewed as the special case of the group of the diagonal phase unitaries.

\medskip
{\it Acknowledgments.}---
We thank Swapan Rana for spurring our interest in this project,
Gerardo Adesso, L\'{\i}dia del Rio, Jonathan Oppenheim and Alexander Streltsov for 
enlightening discussion on different ways of building a resource 
theory of coherence, and on resource theories in general,
and Sandu Popescu for a conversation on relations between the
resource theory of coherence and the measurement of time. Furthermore
Arun Pati and his co-authors for sharing a preliminary draft of their
paper~\cite{Pati-et-al}.

The present work was initiated when the authors were attending 
the programme ``Mathematical Challenges in Quantum Information'' (MQI) 
at the Issac Newton Institute in Cambridge, whose hospitality is 
gratefully acknowledged, 
and where DY was supported by a Microsoft Visiting Fellowship. 
The authors' work was supported by the European Commission (STREP ``RAQUEL''), 
the ERC (Advanced Grant ``IRQUAT''),
the Spanish MINECO (grant numbers FIS2008-01236 and FIS2013-40627-P)
with the support of FEDER funds, as well as by
the Generalitat de Catalunya CIRIT, project~2014-SGR-966.
DY furthermore acknowledges support from the NSFC, grant no.~11375165.

%
%
%

\appendix

\section*{Appendices}

\section{A. Miscellaneous facts and lemmas}
In this appendix, we collect standard facts about various functionals 
we use and show some lemmas that we need in the proofs of the main results.  

We use the Bures distance $B$ based on the fidelity $F$, but it is essentially 
equivalent to the trace norm:
\begin{align*}
  B(\rho,\sigma) &= \sqrt{2}\sqrt{1-F(\rho,\sigma)},\\
  F(\rho,\sigma) &= \tr\sqrt{\rho^{1/2}\sigma\rho^{1/2}}.
\end{align*}
Namely~\cite{FvdG},
\[
  \frac12 B(\rho,\sigma)^2 \leq \frac12 \|\rho-\sigma\|_1 \leq B(\rho,\sigma).
\]

In the proof of Theorem \ref{thm:distillable} below, 
we need the asymptotic continuity of $C_r$.

\begin{lemma}
The relative entropy of coherence is asymptotically continuous, 
i.e.~for $\|\rho-\sigma\|_1\leq\epsilon$,
\[
  |C_r(\rho)-C_r(\sigma)| \leq \epsilon\log d+2h(\epsilon/2), 
\]
where $d$ is the dimension of the supporting Hilbert space,
and $h(x)=-x\log x - (1-x)\log(1-x)$ is the binary entropy function.
\end{lemma}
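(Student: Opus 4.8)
The plan is to reduce the asymptotic continuity of $C_r$ to that of the von Neumann entropy, exploiting the entropic representation $C_r(\rho) = S\bigl(\Delta(\rho)\bigr) - S(\rho)$ recorded in Eq.~\eqref{eq:C_r}. Writing the difference as
\[
  C_r(\rho) - C_r(\sigma)
    = \bigl[ S(\Delta(\rho)) - S(\Delta(\sigma)) \bigr]
      - \bigl[ S(\rho) - S(\sigma) \bigr],
\]
the triangle inequality bounds $|C_r(\rho)-C_r(\sigma)|$ by the sum of two entropy differences: one at the level of the states themselves, and one at the level of their decohered (diagonal) parts. Each of these I intend to control with a single off-the-shelf tool, the Fannes--Audenaert continuity bound~\cite{Audenaert}, which asserts that for $d$-dimensional states with trace distance $T=\tfrac12\|\omega-\tau\|_1$ one has $|S(\omega)-S(\tau)| \leq T\log(d-1) + h(T)$.

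The only substantive point is that the decohered states inherit the closeness of the originals. This holds because the decohering map $\Delta$ is CPTP (indeed a pinching), so by monotonicity of the trace norm under quantum operations, $\|\Delta(\rho)-\Delta(\sigma)\|_1 \leq \|\rho-\sigma\|_1 \leq \epsilon$. Hence both pairs $(\rho,\sigma)$ and $(\Delta(\rho),\Delta(\sigma))$ have trace distance $T\leq\epsilon/2$. Applying Fannes--Audenaert to each, and using that $t\mapsto t\log(d-1)+h(t)$ is nondecreasing on $[0,1/2]$ together with $\log(d-1)\leq\log d$, gives
\[
  |S(\rho)-S(\sigma)| \leq \frac{\epsilon}{2}\log d + h(\epsilon/2),
\]
and the identical bound for the decohered pair. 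Summing the two contributions yields $|C_r(\rho)-C_r(\sigma)| \leq \epsilon\log d + 2h(\epsilon/2)$, as claimed.

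There is no genuine obstacle here beyond bookkeeping: the essential ingredients are the representation \eqref{eq:C_r} of $C_r$ and the contractivity of the trace distance under $\Delta$, after which the estimate is immediate from a standard entropy continuity bound. The one place to be mildly careful is matching constants, since the lemma is stated with the slightly weaker $\log d$ and $2h(\epsilon/2)$ rather than $\log(d-1)$; this only loosens the inequality, and the monotonicity of the Fannes--Audenaert envelope on the relevant small-error range $\epsilon/2\leq 1/2$ ensures the passage from the actual distance $T$ to its upper bound $\epsilon/2$ is valid.
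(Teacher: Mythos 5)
Your proof is correct and follows essentially the same route as the paper's: apply Eq.~(\ref{eq:C_r}), split via the triangle inequality, use contractivity of the trace distance under the CPTP pinching $\Delta$, and invoke the Fannes--Audenaert bound on each entropy difference with $\eta=\epsilon/2$. The only difference is cosmetic (you track $\log(d-1)$ and the monotonicity of the envelope before relaxing to $\log d$, while the paper quotes the bound directly with $\log d$).
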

\begin{proof}
The proof is straightforward because of Eq.~(\ref{eq:C_r}):
From $\|\rho-\sigma\|_1 \leq \epsilon$, we get 
$\|\Delta(\rho)-\Delta(\sigma)\|_1 \leq \epsilon$ 
by the monotonicity of the trace distance under the dephasing operation 
$\Delta$, which is a cptp map. 
By Audenaert's improvement~\cite{Audenaert} of Fannes' 
inequality~\cite{Fannes}, which states for two states 
$\rho$ and $\sigma$ on a $d$-dimensional Hilbert space such that
$\frac12 \|\rho-\sigma\|_1\leq\eta$, then 
$|S(\rho)-S(\sigma)|\le \eta\log d+h(\eta)$, we have
\begin{align*}
  |C_r(\rho)-C_r(\sigma)| &=    |S(\Delta(\rho))-S(\rho)-(S(\Delta(\sigma))-S(\sigma))|,\\
                          &\leq |S(\Delta(\rho))-S(\Delta(\sigma))|+|S(\rho)-S(\sigma)|,\\
                          &\leq \epsilon\log d+2h(\epsilon/2),
\end{align*}
concluding the proof.
\end{proof}

\medskip
In the proof of Theorem \ref{thm:cost} below, we need the monotonicity of the
coherence of formation,
\[
  C_f(\rho) = \min \sum_i p_i S(\Delta(\psi_i)) \text{ s.t. } \rho = \sum_i p_i \psi_i
\] 
under incoherent operations.

\begin{lemma}[{\AA{}berg~\cite{Aaberg:superposition}}]
  \label{lemma:CoF-monotone}
  The coherence of formation is convex and a strong coherence monotone: Indeed,
  for
  \[
    \rho \stackrel{{\rm IC}}{\longmapsto} \sigma = \sum_\ell K_\ell\rho K_\ell^\dagger 
                                                 = \sum_\ell p_\ell\sigma_\ell,
  \]
  it follows that
  \[
    C_f(\rho) \geq \sum_\ell p_\ell C_f(\sigma_\ell) \geq C_f(\sigma).
  \]
\end{lemma}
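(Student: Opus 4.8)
The plan is to prove the two claims in sequence: convexity, which is immediate from the convex-roof structure of $C_f$, and then the strong monotonicity chain $C_f(\rho) \geq \sum_\ell p_\ell C_f(\sigma_\ell) \geq C_f(\sigma)$. The second inequality in the chain I would obtain for free at the very end: since $\sigma = \sum_\ell p_\ell \sigma_\ell$, convexity gives $C_f(\sigma) \leq \sum_\ell p_\ell C_f(\sigma_\ell)$. So the real work is the first inequality.

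For convexity, given a decomposition $\rho = \sum_k \lambda_k \rho_k$, I would take an optimal pure-state decomposition of each $\rho_k$ and concatenate them into a single decomposition of $\rho$. Because $C_f$ is defined as a minimum over all such decompositions, this immediately yields $C_f(\rho) \leq \sum_k \lambda_k C_f(\rho_k)$.

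For the strong monotonicity, I would start from an optimal decomposition $\rho = \sum_i q_i \proj{\psi_i}$, so that $C_f(\rho) = \sum_i q_i S(\Delta(\psi_i)) = \sum_i q_i C(\psi_i)$, and push it through the operation. Since any Kraus operator sends a pure state to a subnormalized pure state, I write $K_\ell \ket{\psi_i} = \sqrt{r_{\ell i}}\,\ket{\phi_{\ell i}}$ with $\ket{\phi_{\ell i}}$ normalized and $r_{\ell i} = \tr K_\ell \proj{\psi_i} K_\ell^\dg$. Then $p_\ell = \sum_i q_i r_{\ell i}$, and $\sigma_\ell = \sum_i \frac{q_i r_{\ell i}}{p_\ell} \proj{\phi_{\ell i}}$ is a valid (not necessarily optimal) decomposition of $\sigma_\ell$, so that $C_f(\sigma_\ell) \leq \sum_i \frac{q_i r_{\ell i}}{p_\ell} C(\phi_{\ell i})$. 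Multiplying by $p_\ell$ and summing over $\ell$ gives
\[
  \sum_\ell p_\ell C_f(\sigma_\ell) \leq \sum_i q_i \sum_\ell r_{\ell i}\, C(\phi_{\ell i}).
\]

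The crux, and the step I expect to be the main obstacle, is the pure-input inequality $\sum_\ell r_{\ell i}\, C(\phi_{\ell i}) \leq C(\psi_i)$ for each fixed $i$: the average output coherence of a pure state under the incoherent instrument must not exceed its input coherence. I would dispatch this by invoking the already-established strong monotonicity of the relative entropy of coherence $C_r$. Applied to the pure state $\psi_i$, whose branches $\phi_{\ell i}$ are again pure, it reads $C_r(\psi_i) \geq \sum_\ell r_{\ell i}\, C_r(\phi_{\ell i})$; and since $C_r(\eta) = S(\Delta(\eta)) - S(\eta) = S(\Delta(\eta)) = C(\eta)$ for any pure $\eta$, this is exactly the desired bound. (Alternatively one could argue from the majorization relation of Theorem~\ref{thm:majorization} together with the Schur-concavity of entropy, but the probabilistic branching into the $\phi_{\ell i}$ makes the $C_r$ route cleaner.) Substituting back yields $\sum_\ell p_\ell C_f(\sigma_\ell) \leq \sum_i q_i C(\psi_i) = C_f(\rho)$, which is the first inequality; the second then follows from convexity as noted, completing the proof.
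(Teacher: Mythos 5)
Your proof is correct and follows essentially the same route as the paper: convexity from the convex-roof structure, reduction of strong monotonicity to pure inputs by pushing an optimal decomposition through the instrument, and then the key pure-state bound $\sum_\ell r_{\ell i}\,C(\phi_{\ell i}) \leq C(\psi_i)$ via monotonicity of $C_r$ together with $C_r=C$ on pure states. The only cosmetic difference is that you invoke the strong monotonicity of $C_r$ directly (a fact the paper records from Baumgratz \emph{et al.}), whereas the paper derives the same pure-state inequality by attaching a classical flag $\ket{\ell}$ to each Kraus operator and applying ordinary monotonicity of $C_r$ to the flagged incoherent channel.
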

\begin{proof}
As a convex roof (aka convex hull), $C_f$ is automatically convex.
Furthermore, because of the convex roof property, we need to prove the strong 
monotonicity only for pure states. Consider a pure state $\psi$ and an incoherent 
operation $\mathcal{E}$ whose Kraus operator set is $\{K_\ell\}$, where each 
$K_\ell$ is incoherent operator, satisfying $\sum K_\ell^{\dagger}K_\ell=\1$. 
Suppose $K_\ell\ket{\psi} = \sqrt{p_\ell} \ket{\psi_\ell}$;
from this we construct a new incoherent operation $\tilde{\mathcal{E}}$ 
whose Kraus operator set is $\{\ket{\ell} \otimes K_\ell\}$ where the $\ket{\ell}$ 
are the basis states of an ancillary system. 
From the monotonicity of the relative entropy of coherence under the 
incoherent operation $\tilde{\mathcal{E}}$, we obtain
\[
  C_r(\psi) \geq C_r(\tilde{\mathcal{E}}(\psi)) 
            =    \sum_\ell p_\ell C_r(\psi_\ell)
            \geq C_f(\mathcal{E}(\psi)),
\]
and we are done.
\end{proof}

\medskip
Our proof for the asymptotic continuity of coherence of formation comes from 
that of entanglement of formation.

\begin{lemma}[{Nielsen~\cite{Nielsen-continuity}, Winter~\cite{AW-cont}}]
\label{lemma:EoF-cont}
For two bipartite states $\rho$ and $\sigma$ supported on a $d\times d$-dimensional Hilbert 
space with Bures distance $B(\rho,\sigma) \leq \epsilon \leq 1$, then
\[
\phantom{===:}
  |E_f(\rho)-E_f(\sigma)| \leq \epsilon\log d + (1+\epsilon) h\left(\!\frac{\epsilon}{1+\epsilon}\!\right).
\phantom{===:} \Box 
\]
\end{lemma}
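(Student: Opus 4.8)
The plan is to reduce the Bures hypothesis to a trace-distance one and then run the interpolation argument of Winter~\cite{AW-cont}. First I would invoke the inequality $\frac12\|\rho-\sigma\|_1\leq B(\rho,\sigma)\leq\epsilon$ recorded at the start of this appendix and set $\delta:=\frac12\|\rho-\sigma\|_1\leq\epsilon$. Writing $(\rho-\sigma)_{\pm}$ for the positive and negative parts of the traceless Hermitian operator $\rho-\sigma$, both have trace $\delta$, so $\omega_{\pm}:=\delta^{-1}(\rho-\sigma)_{\pm}$ are states satisfying $\rho+\delta\omega_-=\sigma+\delta\omega_+$. The key object is the interpolating state $\mu:=\frac{1}{1+\delta}(\rho+\delta\omega_-)=\frac{1}{1+\delta}(\sigma+\delta\omega_+)$, which can be read as a binary mixture in two different ways with mixing probability $p:=\frac{\delta}{1+\delta}$.

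Next I would squeeze $E_f(\mu)$ between the two decompositions. From $\mu=(1-p)\sigma+p\,\omega_+$ and convexity of $E_f$ one gets $E_f(\mu)\leq (1-p)E_f(\sigma)+pE_f(\omega_+)$; from $\mu=(1-p)\rho+p\,\omega_-$ and the affinity-up-to-mixing-entropy bound $E_f((1-p)\rho+p\,\omega_-)\geq(1-p)E_f(\rho)+pE_f(\omega_-)-h(p)$ one gets a matching lower bound. Subtracting and multiplying through by $1+\delta$ clears the denominators and yields
\[
E_f(\rho)-E_f(\sigma)\leq \delta\bigl(E_f(\omega_+)-E_f(\omega_-)\bigr)+(1+\delta)\,h\!\left(\frac{\delta}{1+\delta}\right).
\]
Bounding $E_f(\omega_+)\leq\log d$ (the maximal value of $E_f$ on a $d\times d$ system) and $E_f(\omega_-)\geq 0$, then symmetrizing in $\rho\leftrightarrow\sigma$, gives $|E_f(\rho)-E_f(\sigma)|\leq \delta\log d+(1+\delta)h\!\left(\frac{\delta}{1+\delta}\right)$. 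Since the right-hand side is nondecreasing in $\delta$ on $[0,1]$ (its derivative is $\log\frac{1+\delta}{\delta}>0$) and $\delta\leq\epsilon$, I may replace $\delta$ by $\epsilon$ to obtain the stated bound.

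The main obstacle is the lower bound invoked above, namely that $E_f$ is affine up to the entropy of the classical mixing register, $E_f\bigl(\sum_k\lambda_k\tau_k\bigr)\geq\sum_k\lambda_k E_f(\tau_k)-H(\{\lambda_k\})$. The reverse (convexity) inequality is immediate by concatenating optimal pure-state ensembles, but this direction genuinely quantifies that mixing can hide entanglement only by an amount controlled by $H(\{\lambda_k\})$ — two Bell states averaging to a separable state saturate it, so no cruder bound can hold. I would establish it through the conditional-entropy representation of the convex roof, $E_f(\rho_{AB})=\min S(A|X)$ over cq-extensions $\rho_{ABX}=\sum_x p_x\proj{\psi_x}\otimes\proj x$ with pure conditional states: the block structure of the flagged state $\sum_k\lambda_k\tau_k\otimes\proj k$ forces its $E_f$ (across the cut with the flag on Bob's side) to equal $\sum_k\lambda_k E_f(\tau_k)$, while adjoining or discarding the flag register $K$ can change the relevant conditional entropy by at most $I(A;K\,|\,X)\leq H(K)=H(\{\lambda_k\})$. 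This is precisely the content isolated by Nielsen~\cite{Nielsen-continuity} and sharpened by Winter~\cite{AW-cont}, so I would cite it as the key lemma rather than reprove it in full. The identical argument, with $S(\tr_B\,\cdot\,)$ replaced by $S(\Delta(\cdot))$, then transfers the bound to the coherence of formation $C_f$, which is what the main proofs require.
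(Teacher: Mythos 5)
The paper does not actually prove this lemma: it is imported wholesale from the cited references, with the $\Box$ appearing immediately after the statement, and only its consequence for $C_f$ (via the $\CNOT$ embedding into maximally correlated states) is derived in the following lemma. What you have written is therefore not an alternative to the paper's argument but a reconstruction of the proof in the cited reference \cite{AW-cont}, and as such it is essentially the right one: the reduction $\frac12\|\rho-\sigma\|_1\leq B(\rho,\sigma)\leq\epsilon$ is exactly the inequality recorded at the top of Appendix A; the Jordan decomposition, the interpolating state $\mu=\frac{1}{1+\delta}(\rho+\delta\omega_-)=\frac{1}{1+\delta}(\sigma+\delta\omega_+)$, the squeeze between convexity and ``almost concavity'', the bounds $0\leq E_f\leq\log d$, and the monotonicity of $\delta\mapsto\delta\log d+(1+\delta)h\bigl(\frac{\delta}{1+\delta}\bigr)$ (whose derivative is indeed $\log\frac{1+\delta}{\delta}>0$) are all correct and give the stated constant.

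The one place where you should be careful is the ingredient you yourself single out, the almost-concavity $E_f\bigl(\sum_k\lambda_k\tau_k\bigr)\geq\sum_k\lambda_kE_f(\tau_k)-H(\{\lambda_k\})$. Your heuristic for it is not yet a proof: the identity $E_f\bigl(\sum_k\lambda_k\tau_k\otimes\proj{k}\bigr)=\sum_k\lambda_kE_f(\tau_k)$ for the flagged state is fine (dephasing the flag is local and entropy is concave), and discarding the flag can only decrease $E_f$; but the nontrivial direction is that \emph{adjoining} the flag costs at most $H(\{\lambda_k\})$, and the step ``the conditional entropy changes by at most $I(A;K|X)\leq H(K)$'' presumes that the optimal cq-extension of the unflagged state can be upgraded to a cq-extension of the flagged state with \emph{pure} conditional states. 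Two pure-state decompositions of the same density operator need not admit a common refinement, so this lifting is exactly where the work lies in \cite{AW-cont}. Since you explicitly cite that reference for this lemma rather than claiming to reprove it, the overall proof stands; just do not present the $I(A;K|X)$ line as if it closed the gap. Your closing remark that the same argument transfers to $C_f$ by replacing $S(\tr_B\,\cdot)$ with $S(\Delta(\cdot))$ is consistent with, though slightly different in packaging from, the paper's route, which instead maps $C_f$ onto $E_f$ of $\CNOT(\rho\otimes\proj{0})\CNOT^\dagger$ and applies the present lemma as a black box.
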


\medskip
In the proof of Theorem \ref{thm:cost}, we need the asymptotic continuity of $C_f$.

\begin{lemma}
The coherence of formation is asymptotically continuous, 
i.e.~for $B(\rho,\sigma) \leq \epsilon \leq 1$,
\[
  |C_f(\rho)-C_f(\sigma)| \leq \epsilon\log d + (1+\epsilon) h\left(\!\frac{\epsilon}{1+\epsilon}\!\right),
\]
where $d$ is the dimension of the supporting Hilbert space.
\end{lemma}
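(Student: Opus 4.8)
The plan is to reduce this to the asymptotic continuity of the entanglement of formation, Lemma~\ref{lemma:EoF-cont}, via the maximally-correlated-state correspondence $\rho \mapsto \widetilde{\rho} = \sum_{ij}\rho_{ij}\ketbra{ii}{jj}$. First I would realize this correspondence as the action of the isometry $V:\ket{i}\mapsto\ket{ii}$, so that $\widetilde{\rho}=V\rho V^\dagger$ and $V^\dagger V = \1$. Since $V$ is an isometry it preserves the fidelity, $F(\widetilde{\rho},\widetilde{\sigma})=F(\rho,\sigma)$, and hence the Bures distances coincide, $B(\widetilde{\rho},\widetilde{\sigma})=B(\rho,\sigma)\leq\epsilon$. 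The image $\widetilde{\rho}$ lives on $\CC^d\ox\CC^d$, so Lemma~\ref{lemma:EoF-cont} applies with local dimension $d$ and produces exactly the claimed right-hand side as a bound on $|E_f(\widetilde{\rho})-E_f(\widetilde{\sigma})|$. Thus the whole lemma follows once I prove the exact identity $C_f(\rho)=E_f(\widetilde{\rho})$.

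To establish that identity, I would first check it on pure states: for $\ket{\psi}=\sum_i\psi_i\ket{i}$ the vector $\ket{\widetilde{\psi}}=V\ket{\psi}=\sum_i\psi_i\ket{ii}$ has reduced state $\sum_i|\psi_i|^2\proj{i}=\Delta(\psi)$, so its entropy of entanglement equals $C(\psi)=S(\Delta(\psi))$. Consequently any incoherent pure-state decomposition $\rho=\sum_i p_i\proj{\psi_i}$ maps through $V$ to the decomposition $\widetilde{\rho}=\sum_i p_i\proj{\widetilde{\psi_i}}$ with the same average cost, which yields the easy inequality $E_f(\widetilde{\rho})\leq C_f(\rho)$. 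The reverse inequality is the substantive direction: I would take an optimal pure-state decomposition $\widetilde{\rho}=\sum_k q_k\proj{\phi_k}$ and pull it back through $V^\dagger$, using $\rho=V^\dagger\widetilde{\rho}V$ to obtain a valid decomposition of $\rho$ of equal cost, whence $C_f(\rho)\leq E_f(\widetilde{\rho})$.

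The one point requiring care — and the main obstacle, albeit a mild one — is making the pull-back legitimate, i.e.~ensuring that every pure state $\ket{\phi_k}$ appearing in a decomposition of $\widetilde{\rho}$ is itself maximally correlated. This follows because any such $\ket{\phi_k}$ must lie in the range of $\widetilde{\rho}$, and the range of $\widetilde{\rho}=\sum_{ij}\rho_{ij}\ketbra{ii}{jj}$ is contained in $\operatorname{span}\{\ket{ii}:i\}$; hence $\ket{\phi_k}=\sum_i\phi_{k,i}\ket{ii}=V\ket{\psi_k}$ with $\ket{\psi_k}=V^\dagger\ket{\phi_k}$, and $E(\phi_k)=C(\psi_k)$. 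With the support argument in hand the two convex-roof minimizations range over the same set of decompositions, giving $C_f(\rho)=E_f(\widetilde{\rho})$ and completing the reduction.
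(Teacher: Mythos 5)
Your proposal is correct and follows essentially the same route as the paper: your isometry $V:\ket{i}\mapsto\ket{ii}$ is exactly the paper's $\CNOT(\cdot\otimes\proj{0})\CNOT^\dagger$ embedding, and both arguments then invoke the asymptotic continuity of $E_f$ (Lemma~\ref{lemma:EoF-cont}) together with the identity $C_f(\rho)=E_f(\widetilde{\rho})$. The only difference is that you prove that identity via the support argument for ensemble decompositions (correctly), whereas the paper simply cites it from~\cite{Adesso-et-al}; your explicit check that $V$ preserves fidelity, and hence Bures distance, is also slightly cleaner than the paper's appeal to trace-norm invariance, given that Lemma~\ref{lemma:EoF-cont} is stated in terms of $B(\rho,\sigma)$.
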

\begin{proof}
Notice two facts: One is that 
\[\begin{split}
  \|\rho-\sigma\|_1
     &= \bigl\| \CNOT(\rho\otimes\proj{0})\CNOT^\dagger \bigr. \\
     &\phantom{===}
                \bigl. - \CNOT(\sigma\otimes\proj{0})\CNOT^\dagger \bigr\|_1,
\end{split}\]
the bipartite states on the right hand side being supported on a 
$d\times d$-dimensional Hilbert space, and the other is that 
$C_f(\rho)=E_f\bigl(\CNOT(\rho\otimes \proj{0})\CNOT^\dagger\bigr)$~\cite{Adesso-et-al}. 
From these and Lemma~\ref{lemma:EoF-cont} the claim follows.
\end{proof}

\medskip
In the proof of Theorem \ref{thm:distillable}, we need the following 
slight modification of~\cite[Prop.~2.4]{DW05}. 

\begin{lemma}[\protect{Cf.~Devetak/Winter~\cite[Prop.~2.4]{DW05}}]
  \label{lemma:covering}
  For a family $\{W_x\}_{x\in \cX}$ of quantum states on a $d$-dimensional
  Hilbert space $\cH$, and the type class $\cT^n_P=\{W^n_{x^n}\}$ w.r.t. an 
  empirical type $P$ of length-$n$-sequences over $\cX$, let $(Y_1,\ldots, Y_S)$ 
  be a random tuple whose elements are sampled from $\cT^n_P$ uniformly without
  replacement. Define the average state
  \[
    \sigma(P) = \frac{1}{|\cT^n_P|} \sum_{x^n\in\cT^n_P} W^n_{x^n}.
  \]
  Then, for every $0< \epsilon, \delta < 1$ and sufficiently large $n$,
  \[\begin{split}
    \Pr&\left\{ \left\| \frac1S \sum_{j=1}^S Y_j - \sigma(P) \right\|_1
               \geq \epsilon \right\}                                           \\
       &\phantom{=========}
        \leq 2 d^n \exp\left( -S\iota^n\frac{\epsilon^2}{288\ln 2}\right),
  \end{split}\]
  where $\iota = 2^{-I(X:W)+\delta}$ and $I(X:W)=S(X)+S(W)-S(XW)=S(\sum P_xW_x)-\sum P_xS(W_x)$ is the mutual information evaluated on the state $\omega=\sum_x P_x\ketbra{x}{x}\otimes W_x$.
\end{lemma}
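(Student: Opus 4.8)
The plan is to establish this by the operator Chernoff bound of Ahlswede and Winter~\cite{AW:covering}, following the proof of~\cite[Prop.~2.4]{DW05} essentially verbatim; the only genuinely new feature is that the tuple $(Y_1,\dots,Y_S)$ is drawn \emph{without} replacement, and this I would handle at the very end by the matrix tail estimate of Gross and Nesme~\cite{GN}. First I would set up the typical-subspace machinery. Let $\sigma = \sum_x P_x W_x$ and let $\Pi$ be the typical projector of $\sigma^{\ox n}$, and for each $x^n \in \cT^n_P$ let $\Pi_{x^n}$ be the conditionally typical projector of $W^n_{x^n}$. These satisfy the usual estimates, for a typicality parameter $\delta'$ that may be taken as small as we like: $\tr\Pi \leq 2^{n(S(\sigma)+\delta')} \leq d^n$; on the support of $\Pi$ one has $\Pi\sigma(P)\Pi \geq 2^{-n(S(\sigma)+\delta')}\Pi$; one has the norm bound $\Pi_{x^n}W^n_{x^n}\Pi_{x^n} \leq 2^{-n(\sum_x P_x S(W_x)-\delta')}\Pi_{x^n}$; and both projectors capture all but a vanishing fraction of the weight of the relevant states.

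Second, I would pass to the cut-down, normalised operators
\[
  Z_{x^n} = 2^{\,n(\sum_x P_x S(W_x)-\delta')}\,\Pi\,\Pi_{x^n}W^n_{x^n}\Pi_{x^n}\,\Pi,
\]
so that $0 \leq Z_{x^n} \leq \Pi \leq \1$. Writing $\bar Z$ for the average of $Z_{x^n}$ over $\cT^n_P$, the norm bound guarantees $Z_{x^n}\leq\1$ while the gentle-operator lemma identifies $\bar Z$ with $2^{\,n(\sum_x P_x S(W_x)-\delta')}\Pi\sigma(P)\Pi$ up to vanishing corrections. Combining with the lower eigenvalue bound gives, on the support of $\Pi$,
\[
  \bar Z \;\geq\; 2^{-n\left(S(\sigma)-\sum_x P_x S(W_x)+2\delta'\right)}\Pi
        \;=\; 2^{-n(I(X:W)+2\delta')}\Pi,
\]
so that the minimal eigenvalue of $\bar Z$ is governed by the single-letter exponent $I(X:W)$; this is precisely the quantity $\iota^n$ driving the Chernoff rate, the parameter $\delta$ of the statement absorbing the typicality slack $2\delta'$. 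A second use of the gentle-operator lemma then bounds $\bigl\| \frac1S\sum_j Y_j - \sigma(P)\bigr\|_1$ by $\bigl\| \frac1S\sum_j Z_{Y_j} - \bar Z\bigr\|_\infty$ plus the vanishing cut-down terms, so that the event of the statement is contained, up to those negligible corrections, in the two-sided operator event $\frac1S\sum_j Z_{Y_j} \not\in \bigl[(1\pm\tfrac{\epsilon}{2})\,\bar Z\bigr]$.

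Third, I would bound the probability of that event. For an i.i.d.\ (with-replacement) sample the operator Chernoff bound gives, for each of the two tails, a contribution at most $(\tr\Pi)\exp\!\bigl(-S\,\lambda_{\min}(\bar Z)\,\tfrac{(\epsilon/2)^2}{2\ln2}\bigr)$; substituting $\lambda_{\min}(\bar Z)\geq\iota^n$ and $\tr\Pi\leq d^n$ and tracking the numerical constant through the gentle-operator and Chernoff estimates produces exactly the factor $2\,d^n\exp\!\bigl(-S\iota^n\tfrac{\epsilon^2}{288\ln2}\bigr)$. Since our sample is drawn uniformly \emph{without} replacement, the final step is to invoke Gross--Nesme~\cite{GN}: the matrix moment-generating function underlying the Chernoff estimate for sampling without replacement is dominated by its with-replacement counterpart, so the same bound applies verbatim.

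The main obstacle I anticipate lies not in any single inequality but in the interface between the layers: making the cut-down losses (the weight thrown away by $\1-\Pi$ and by $\1-\Pi_{x^n}$) genuinely \emph{uniform} over the type class, so that they can be driven below any fixed fraction of $\epsilon$ for all large $n$, and then threading these vanishing corrections through the two applications of the gentle-operator lemma cleanly enough that what remains is exactly a two-sided Chernoff event with the stated constant. By contrast the without-replacement reduction, although it is the true novelty relative to~\cite{DW05}, should be the least troublesome step, being a black-box application of~\cite{GN} once the i.i.d.\ bound is in hand.
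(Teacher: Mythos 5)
Your proposal is correct and follows essentially the same route as the paper: the paper's own proof simply cites \cite[Prop.~2.4]{DW05} as resting on the Ahlswede--Winter matrix tail bound for i.i.d.\ sampling and then invokes Gross--Nesme \cite{GN} to transfer that bound to sampling without replacement, which is exactly your final step. The only difference is that you unfold the internals of the DW05 argument (typical projectors, cut-down operators, operator Chernoff bound) where the paper treats them as a black box; the substance is the same.
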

\begin{proof}
Prop.~2.4 in \cite{DW05} follows directly from the matrix tail bound in~\cite{AW:covering} 
when the matrices are sampled uniformly with replacement, that is, $Y_j$ are i.i.d.~random 
variables. As pointed out in \cite{Gross}, the matrix tail bound in~\cite{AW:covering} 
still holds when the $Y_j$ are sampled uniformly without replacement \cite{GN}. 
So the modified \cite[Prop.~2.4]{DW05} also holds.
\end{proof}

\medskip
In the proof of Theorem \ref{thm:criterion}, we need the following lemma, 
which comes essentially from the structure of the state with 
vanishing conditional mutual information \cite{Hayden-equality}.
\begin{lemma}[{Winter/Yang~\cite{WY}}]
\label{lemma:WY}
A state $\rho_{AB}$ in the finite dimensional Hilbert space ${\cal H}_A\otimes {\cal H}_B$ 
satisfying
\[
  S(B)-S(AB)=E_f(AB)
\]
is of the form 
\[
  \rho^{AB}=\bigoplus p_i\rho_{i}^{B_i^L}\otimes\phi_{i}^{AB_i^R}, 
\]
where $\phi_{i}^{AB_i^R}$ are pure states and system $B$ is decomposed 
as a direct sum of tensor products:
${\cal H}_{B}=\bigoplus{\cal H}_{B_i^L}\otimes{\cal H}_{B_i^R}$.
\phantom{.}\hfill\qed
\end{lemma}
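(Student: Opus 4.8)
The plan is to reduce the hypothesis to an instance of strong subadditivity saturated with equality, and then invoke the structure theorem of Hayden, Jozsa, Petz and Winter~\cite{Hayden-equality} characterizing such states, followed by a short cleanup that extracts the pure-state factors.

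First I would fix an optimal pure-state decomposition $\rho_{AB} = \sum_x q_x \proj{\psi_x}$ attaining the minimum in $E_f(\rho_{AB}) = \sum_x q_x S(A)_{\psi_x}$ (the optimum is attained on a finite ensemble). Attaching a classical flag register $X$ with orthonormal basis $\{\ket{x}\}$, I form the cq-state
\[
  \omega_{ABX} = \sum_x q_x \proj{\psi_x}_{AB} \ox \proj{x}_X .
\]
Then I would read off the relevant entropies of $\omega$. Since each $\psi_x$ is pure on $AB$ and $X$ is a classical register, $S(ABX) = S(X) = H(q)$ and $S(AB|X) = 0$, while the purity identity $S(B)_{\psi_x} = S(A)_{\psi_x}$ together with optimality of the decomposition gives $S(BX) = H(q) + E_f(\rho_{AB})$. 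Hence the conditional mutual information with $B$ as pivot is
\[
  I(A:X|B)_\omega = S(AB) + S(BX) - S(B) - S(ABX) = S(AB) + E_f(\rho_{AB}) - S(B),
\]
which vanishes \emph{exactly} under the hypothesis $E_f(AB) = S(B) - S(AB)$.

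Next I would apply the equality theorem to $I(A:X|B)_\omega = 0$: there is a decomposition $\cH_B = \bigoplus_j \cH_{B_j^L}\ox\cH_{B_j^R}$ with $\omega_{ABX} = \bigoplus_j \lambda_j\, \tau_j^{AB_j^L}\ox\xi_j^{B_j^R X}$. Tracing out $X$ already recovers $\rho_{AB}$ in almost the desired shape; the remaining work — which I expect to be the \textbf{main obstacle} — is to upgrade the generic factors $\tau_j^{AB_j^L}$ to \emph{pure} states, as the statement demands, since HJPW by itself only yields a direct sum of tensor products with arbitrary factors.

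For this upgrade I would re-use $S(AB|X) = 0$, i.e.\ that conditioning on an outcome $X=x$ leaves $AB$ in a pure state. Reading the same conditional state off the block form gives $\frac{1}{\Pr(x)}\sum_j \lambda_j p^j_x\, \tau_j^{AB_j^L}\ox\eta^j_x$, where each $\xi_j^{B_j^R X}$ is necessarily $X$-block-diagonal, say $\xi_j^{B_j^R X} = \sum_x p^j_x\, \eta^j_x\ox\proj{x}$. Because the summands for distinct $j$ are supported on the mutually orthogonal subspaces $\cH_{B_j^L}\ox\cH_{B_j^R}$ of $\cH_B$, a rank-one sum forces exactly one block $j(x)$ to contribute and forces both $\tau_{j(x)}^{AB_{j(x)}^L}$ and $\eta^{j(x)}_x$ to be rank one. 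Thus each $\tau_j^{AB_j^L} =: \proj{\phi_j}$ is pure; tracing out $X$ and relabelling $B_j^L \leftrightarrow B_j^R$ to match the convention of the statement yields $\rho_{AB} = \bigoplus_j p_j\, \rho_j^{B_j^L}\ox\phi_j^{AB_j^R}$ with $\phi_j$ pure, as claimed.
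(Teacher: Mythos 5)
Your proof is correct, and it follows essentially the route the paper itself indicates: the paper gives no in-text proof of Lemma~\ref{lemma:WY}, deferring to Winter/Yang~\cite{WY} and noting that the result ``comes essentially from the structure of the state with vanishing conditional mutual information''~\cite{Hayden-equality}, which is exactly your argument (optimal $E_f$ ensemble with a classical flag $X$, the identity $I(A{:}X|B)=S(AB)+E_f-S(B)=0$, the HJPW decomposition over $\cH_B$, and the purity cleanup using $S(AB|X)=0$). No gaps to report.
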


\section{B. Proofs}
Here we provide the detailed proofs of the results in the main text. 

\medskip
\begin{proofthm}{\bf of Theorem~\ref{thm:majorization}.} 
Assume $\operatorname{spec}\Delta(\psi) = \vec{p} 
\succ \vec{q} = \operatorname{spec}\Delta(\varphi)$,
and without loss of generality,
\[
  \ket{\psi} = \sum_i \sqrt{p_i}\ket{i},
  \quad
  \ket{\varphi} = \sum_i \sqrt{q_i}\ket{i},
\] 
since the diagonal unitaries required to adjust the phases
are incoherent operations.

It is well-known~\cite{MarshallOlkin} that
majorization implies that there is a probability
distribution $\lambda_\pi$ over permutations $\pi\in S_d$ such that
\[
  \vec{q} = \sum_\pi \lambda_\pi {\vec{p}}^\pi,
\]
where ${\vec{p}}^\pi$ is the vector $\vec{p}$ with indices
permuted according to $\pi$: ${\vec{p}}^\pi(i) = p_{\pi(i)}$.
Observe that we may without loss of generality assume the
$p_i$ and $q_i$ to be ordered non-increasingly, and also 
w.l.o.g.~that all $q_i > 0$, otherwise we reduce the Hilbert
space dimension $d$.

Now, define the Kraus operators
\[
  K_\pi := \sum_i \sqrt{\lambda_\pi}\sqrt{\frac{p_{\pi(i)}}{q_i}} \ketbra{\pi(i)}{i},
\]
which are evidently incoherent, and define a cptp map:
\[
  \sum_\pi K_\pi^\dagger K_\pi = \sum_{i\,\pi} \lambda_\pi \frac{p_{\pi(i)}}{q_i} \proj{i} = \1.
\]
Furthermore, it effects the desired transformation:
\[
 K_\pi\ket{\varphi} = \sum_i \sqrt{\lambda_\pi}\sqrt{p_{\pi(i)}} \ket{\pi(i)} 
                    = \sqrt{\lambda_\pi} \ket{\psi},
\]
for every permutation $\pi$.

Conversely, suppose $\varphi \stackrel{{\rm }}{\longmapsto} \psi$ by the incoherent operation with 
Kraus operators $\{K_\ell\}$, then $K_\ell\ket{\varphi} \propto \ket{\psi}$ 
for every $\ell$.  Since $\operatorname{rank} \Delta(\psi) = \operatorname{rank} \Delta(\varphi)$, 
we arrive at $K_\ell=\sum_i c_\ell(i)\ketbra{\pi_\ell(i)}{i}$, 
with a permutation $\pi_\ell$ on the basis states. 
The same form obtains from the assumption of a strictly incoherent map.
That is, $\sum_i c_\ell(i)\sqrt{q_i} \ket{\pi_\ell(i)} \propto \sum_i\sqrt{p_i} \ket{i}$. 
Now we show that
\[
  \ket{\varphi'}^{AB}\stackrel{\text{LOCC}}{\longmapsto}\ket{\psi'}^{AB},
\]
where 
\begin{align*}
  \ket{\varphi'}^{AB} &= \CNOT(\ket{\varphi}^A\ket{0}^B) = \sum_i \sqrt{q_i}\ket{i}^A\ket{i}^{B}, \\
  \ket{\psi'}^{AB}    &= \CNOT(\ket{\psi}^A\ket{0}^B)    = \sum_i \sqrt{p_i}\ket{i}^A\ket{i}^{B}.
\end{align*} 
Namely, Alice performs the measurement with Kraus operators $\{K_\ell\}$ and tells the 
outcome $\ell$ to Bob, who then performs the permutation transformation 
$\pi_\ell$ on system B. Thus they end up with the state
\[\begin{split}
  K_\ell^A\otimes \pi_\ell^{B}\ket{\varphi'}^{AB}
                     &=       \sum_i c_\ell(i)\sqrt{q_i}\ket{\pi_\ell(i)}^{A}\ket{\pi_\ell(i)}^{B} \\
                     &\propto \sum_i\sqrt{p_i}\ket{i}^{A}\ket{i}^B
                      =\ket{\psi'}^{AB}.
\end{split}\]
I.e., $\ket{\varphi'}^{AB}$ can be converted to $\ket{\psi'}^{AB}$ by an LOCC 
operation; from \cite{Nielsen:majorization}, we get that $\Delta(\psi) \succ 
\Delta(\varphi)$.
\end{proofthm}

\medskip
\begin{remark}
\normalfont
In \cite{DuBaiGuo}, it was claimed by a different approach that majorization
$\Delta(\psi) \succ \Delta(\varphi)$ follows already from a general 
incoherent transformation $\varphi \stackrel{{\rm IC}}{\mapsto} \psi$.
However, it seems that the proof has a gap, by way of some 
unjustified assumptions.
For the moment, Theorem~\ref{thm:majorization} is the best available statement.
\end{remark}

\medskip
\begin{proofthm}{\bf of Corollay \ref{cor:max-coherent-gen-all}.}
If $\rho$ is pure, then $\Phi_d \stackrel{\text{IC}}{\longmapsto} \rho$
follows from Theorem~\ref{thm:majorization}.

Any mixed state is a convex combination of pure states, $\psi_i$
with probability weight $p_i$, each of which can be obtained
from $\Phi_d$ by an incoherent map $\cE_i$. Then, 
$\cE = \sum_i p_i \cE_i$ is also incoherent and $\rho = \cE(\Phi_d)$. 
\end{proofthm}

\medskip
\begin{proofthm}{\bf of Theorem \ref{thm:asymptotic}.}
Without loss of generality, we suppose a general pure state in 
$d$-dimensional Hilbert space in the form 
$\ket{\psi}=\sum_{i=1}^d \sqrt{q_i} \ket{i}$ with 
a probability distribution $(q_i)$. 
The state of $n$ copies of $\ket{\psi}$ is
\[
  \ket{\psi}^{\ox n} = \sum_{i_1\ldots i_n} \sqrt{q_{i_1}\cdots q_{i_n} }\ket{i_1\ldots i_n}.
\]
We shall use information theory abbreviations for a generic
sequence of length $n$, $i^n = i_1\ldots i_n$, the probability
$q_{i^n} = q_{i_1}\cdots q_{i_n}$ and the state 
$\ket{i^n} = \ket{i_1\ldots i_n}$~\cite{CoverThomas}.
In analogy to entanglement concentration and dilution for bipartite pure state, 
we establish the conversion rate between the state $\ket{\psi}$ 
and the unit coherence state $\ket{\Phi_2}$. 

Concentration: Consider the coefficients and the basis sequences of $n$ copies 
of the state $\ket{\psi}$, we perform the type measurement $\{M_P, P\in {\cal P}_n\}$, 
where ${\cal P}_n$ is the set of \emph{types} of sequences with length $n$, and  
\[
  M_P=\sum_{i^n\in T(P)} \proj{i_1\ldots i_n},
\] 
with $T(P)$ the type class of $P$, i.e.~the set of sequences $i^n$ in which
the relative frequency of each letter $i$ equals $P(i)$.
By the law of large numbers, with probability converging to $1$, 
we get a type $P \approx Q$ (these are called \emph{typical}), which hence
has the property that $H(P)\approx H(Q)$, where $H(P) = -\sum_i P(i) \log P(i)$ is the
Shannon entropy of the distribution $P$.
Notice that this measurement is an incoherent operation, and that
the output state after application of $M_P$ is the maximally coherent state 
on the subspace spanned by the type class $T(P)$, which has dimensionality
\[
  2^{n H(P)}\geq |T(P)| \geq (n+1)^{-d} 2^{n H(P)}, 
\]
so by relabelling the indices of the state -- which can be realized by an 
incoherent unitary transformation --, the state can be transformed to 
$\ket{\Phi_2}^{\otimes nR}$ with $H(P)\geq R \geq H(P)-d\frac{\log(n+1)}{n}$, which 
tends to $H(P) \approx H(Q) = S(\Delta(\psi))$ when $n\to \infty$.  

Dilution: Consider the coefficients and the basis sequencers of n copies of
the state $\ket{\psi}$. We decompose the state into two terms 
\[
  \ket{\psi}^{\otimes n} = \sqrt{\Pr(\cT^n_{Q,\delta})} \ket{\text{typ}}
                            + \sqrt{1-\Pr(\cT^n_{Q,\delta})} \ket{\text{atyp}},
\]
with the typical part of the state, 
\[
  \ket{\text{typ}} = \frac{1}{\sqrt{\Pr(\cT^n_{Q,\delta})}}\sum \sqrt{q_{i_1}\cdots q_{i_n}}\ket{i_1\ldots i_n}, 
\]
summing over all $i_1\cdots i_n\in \cT^n_{Q,\delta}$, and an atypical rest $\ket{\text{atyp}}$.
Here, $\cT^n_{Q,\delta}$ the set of (entropy) typical sequences defined as 
\[
  \cT^n_{Q,\delta}:=\left\{i^n = i_1\ldots i_n: \left|-\frac{1}{n}\log (q_{i_1}\cdots q_{i_n})-H(Q) \right|
                                                                                          \leq \delta\right\}. 
\]
By the law of large numbers, for any $\epsilon >0$ and sufficiently large $n$, 
$\Pr(\cT^n_{Q,\delta})\geq 1-\epsilon$. 
Furthermore, $|\cT^n_{Q,\delta}|\leq 2^{n(H(Q)+\delta)}$. 
The dilution protocol uses then $n(H(Q)+\delta)$ copies of $\ket{\Phi_2}$, whose coefficients 
are uniform in dimension $2^{n(H(Q)+\delta)}$ and by Theorem \ref{thm:majorization} 
this state can be deterministically transformed by an incoherent operation
to any other pure state in dimension $2^{n(H(Q)+\delta)}$. 
So we can prepare the state $\ket{\text{typ}}$ satisfying 
$F(\proj{\text{typ}},\psi^{\otimes n}) \geq \sqrt{1-\epsilon}$, and 
the required rate of qubit maximally coherent states $\ket{\Phi_2}$ tends
to $H(Q) = S(\Delta(\psi))$. 

Optimality: We show both at the same time. Namely, consider a pure state $\ket{\psi}$
and asymptotic transformations 
$\psi^{\ox n} \stackrel{{\rm IC}}{\longmapsto} \approx \Phi_2^{nR-o(n)}$
and $\Phi_2^{\ox nR'+o(n)} \stackrel{{\rm IC}}{\longmapsto} \approx \psi^{n}$.
We know already that $R = C(\psi) = R'$ are achievable. Assume by contradiction
that one of the two can be improved, i.e.~$R > C(\psi)$ achievable for
concentration or $R' < C(\psi)$ is achievable for dilution. Thus,
we can have $\widetilde{R} = \frac{R}{R'} > 1$. 
Then, by composing the dilution and concentration protocols, we get an incoherent
transformation
\[
  \Phi_2^{\ox n} \stackrel{{\rm IC}}{\longmapsto} \approx \Phi_2^{n\widetilde{R}-o(n)}
\]
for large $n$. But by Lemma~\ref{lemma:rank-nonincrease}, each Kraus
operator $K_\ell$ of this transformation must produce a state $\ket{\psi_\ell}$
with $\operatorname{rank}\Delta(\psi_\ell) \leq 2^n$. From this one can readily derive
\[
  F^2(\psi_\ell,\Phi_2^{n\widetilde{R}-o(n)}) =    \tr\psi_\ell\Phi_2^{n\widetilde{R}-o(n)} 
                                              \leq 2^{-n(\widetilde{R}-R)-o(n)},
\]
and by convex combination, the output of any incoherent operation on
$\Phi_2^{\ox n}$ has exponentially small fidelity $\leq 2^{-n(\widetilde{R}-R)/2-o(n)}$ 
with $\Phi_2^{n\widetilde{R}-o(n)}$.
This contradiction shows that necessarily $R\leq C(\psi) \leq R'$.
\end{proofthm}

\medskip
\begin{proofthm}{\bf of Theorem \ref{thm:distillable}.}
The proof consists of two parts, the direct part showing that the claimed rate 
is achievable, and the converse part showing that no more can be distilled. 
In the direct part, we use the typicality technique and a covering lemma on 
operators. For the converse part, we follow the standard argument where we 
need the monotonicity, asymptotic continuity, and additivity of relative entropy 
of coherence.

Consider the purification
\[
  \ket{\phi}^{AE}=\sum_i \sqrt{p_i}\ket{i}^A\ket{\phi_i}^E
\]
of $\rho$, i.e.~$\tr_E\proj{\phi}^{AE}=\rho^{A}$. 
We perform the type measurement $\{M_P, P\in {\cal P}_n\}$, where ${\cal P}_n$ is the 
set of types of sequences with length $n$, and 
\[
  M_P=\sum_{i^n\in T(P)} \proj{i_1\ldots i_n},
\]
where $T(P)$ the type class of $P$ (see the proof of Theorem~\ref{thm:asymptotic}).

By the law of large numbers, with probability converging to $1$, we get as 
outcome a typical type $Q \approx P$, hence 
with the property $H(Q)\approx H(P)$. 
Then the bipartite state we are left with is 
\begin{equation}
  \label{eq:phi-Q}
  \ket{\phi(Q)}^{AE} = \frac{1}{\sqrt{|T(Q)|}} \sum_{i^n\in T(Q)} \ket{i^n}^A\ket{\phi_{i^n}}^E.
\end{equation} 
From Lemma \ref{lemma:covering}, we conclude that there exists a partition of 
the type class $T(Q)$ into $|T(Q)|/S=:M$ 
subsets $\{S_m\}$ with $|S_m|=S$ such that at least $M(1-\epsilon)$ of these
subsets of $\{S_m\}$ are ``good'' in the sense that the average states of system $E$ 
over the subsets is almost the same as a fixed state, the average state over $T(Q)$, 
\[
  \frac1S \sum_{i^n\in S_m} \phi_{i^n} \approx \frac{1}{|T(P)|} \sum_{i^n\in T(P)} \phi_{i^n}.
\]
The other, ``bad'', subsets are few.
Relabelling the indices $i^n \leftrightarrow (m,s)$, we can write 
Eq.~(\ref{eq:phi-Q}) as follows.
\[
  \ket{\phi(Q)}^{AE} 
     = \frac{1}{\sqrt{M}} \sum_{m=1}^M \ket{m} \ox \frac{1}{\sqrt{S}} \sum_{s=1}^S \ket{s}\ket{\phi_{ms}}.
\]
Now introduce
\[
  \ket{\phi(Q)_m} := \frac{1}{\sqrt{S}} \sum_{s=1}^S \ket{s}\ket{\phi_{ms}};
\] 
by Uhlmann's theorem \cite{Uhlmann,Jozsa}, for ``good'' $m$, there exists a unitary 
$U_m$ on $A$ such that $(U_m\otimes \1) \ket{\phi(Q)_m} \approx \ket{\phi(Q)_0}$. 
Now we perform the measurement $\{P_{\text{good}}, P_{\text{bad}}\}$,
and with probability $1-\epsilon$, we get the state 
\[
  \frac{1}{\sqrt{M(1-\epsilon)}} \sum_{\text{good }m} |m\>|\phi(Q)_m\>. 
\] 
Then we construct the incoherent operation whose Kraus operators are of the form
\[
  K_s=\sum_{\text{good }m} \proj{m}\otimes \ketbra{0}{s} U_{m}.
\]
This gives
\[
  K_s\ket{\phi(Q)}^{AE} 
     \approx \frac{1}{\sqrt{M(1-\epsilon)}} \sum_{\text{good }m} \ket{m}\ket{0} \bra{s} \phi(Q)_0\rangle,
\] 
which shows that we obtain an approximation to the coherent state 
$\frac{1}{\sqrt{M}} \sum_m \ket{m}$. 

It remains to estimate the quantity $M$ in the above protocol.
Observe $2^{n H(Q)}\ge|T(Q)|\geq (n+1)^{-d} 2^{n H(Q)}$ and $S=2^{nI(A:E)_\omega}$, 
where the mutual information is calculated with respect to the state 
$\omega = \sum_i q_i\proj{i}^A\otimes \proj{\phi_i}^E$. 
Since $Q$ is a typical type, we have
$H(Q)\approx H(P) = S(\Delta(\rho))$, and
\[
  I(A:E)_\omega = S(\omega^E) \approx S\left( \sum_i p_i \phi_i \right) = S(\rho^A).
\] 
Thus, we get $\frac{1}{n}\log M\to S(\Delta(\rho))-S(\rho)$ when $n\to \infty$.

Conversely, for any protocol $\cL_n$ such that 
$\left\| \cL_n(\rho^{\otimes n}) - \Phi_2^{\otimes nR} \right\|_1 \leq \epsilon$, 
we have
\begin{align}
nC_r(\rho) &=    C_r(\rho^{\otimes n}) \label{add1} \\
           &\geq C_r(\cL_n(\rho^{\otimes n})) \label{mono} \\
           &\geq C_r(\Phi_2^{\otimes nR}) - n\epsilon\log d-2h(\epsilon/2) \label{cont} \\
           &=    nR - n\epsilon\log d - 2h(\epsilon/2), \label{add2}
\end{align}
where Eqs.~(\ref{add1}) and (\ref{add2}) come from additivity and Ineq.~(\ref{mono})
is due to monotonicity and Ineq.~(\ref{cont}) due to asymptotic continuity.
So $R\le C_r(\rho) + \epsilon\log d + 2h(\epsilon/2)$.
When $\epsilon\to 0$ and $R \to C_d(\rho)$ as $n\to\infty$,
we obtain $C_d(\rho)\leq C_r(\rho)$.
\end{proofthm}

\medskip
\begin{proofthm}{\bf of Theorem \ref{thm:cost}.}
The proof consists of two parts, the direct part that shows we can prepare 
the state at the claimed rate, and the converse part that says that to prepare 
the state we have to consume $\Phi_2$ at least at that rate. 
The proof is very similar to the entanglement cost \cite{HHT-Ec}, except that 
here we have additivity which makes it simpler. In the direct part, we 
just prepare the typical part of the state and for the converse part, we 
still have the standard argument where we need the monotonicity, asymptotic 
continuity, and additivity of coherence of formation.

For an optimal convex decomposition, $\rho = \sum_i p_i \psi_i$, where the
indices $i$ range over an alphabet $\Omega$ (w.l.o.g.~of cardinality
$|\Omega| \leq d^2$, by Caratheodory's Theorem), we have 
\[
  \rho^{\ox n} = \sum_{i^n} p_{i^n} \psi_{i^n},
\]
with $i^n=i_1\ldots i_n$, $p_{i^n} = p_{i_1}\cdots p_{i_n}$
and $\psi_{i^n} = \psi_{i_1} \ox \cdots \ox \psi_{i_n}$.

The set of (frequency-)typical sequences,
\[
  \cT = \bigl\{i^n : \forall j\ |f_j(i^n)-p_j| \leq \delta_1 \bigr\},
\]
with $f_j(i^n) = \frac1n |\{t:i_t=j\}|$,
has the property that for large enough $n$,
\[
  \Pr(\cT) \geq 1-\epsilon_1.
\]

Now we rewrite $\rho^{\otimes n}=\rho(\cT)+\rho_0$, with the sub-normalized
state
\[
  \rho(\cT)=\sum_{i^n\in \cT}p_{i^n}\proj{\psi_{i^n}},
\]
and a rest $\rho_0$.
The protocol is now to sample an element $i^n\in\cT$ according to
the conditional distribution $\frac{1}{p^{\ox n}(\cT)}p^{\ox n}|_{\cT}$
of $p^{\ox n}$ restricted to ${\cal T}$. 
The number of occurrences of $j$ in each typical sequence $i^n \in \cT$ is
$N(j|i^n)\le n(p_j+\delta_1)$. 
By the conherence dilution protocol we can prepare a state
\[
  \rho_j^{(N(j|i^n))}\stackrel{1-\epsilon_2}{\approx} \psi_j^{\otimes N(j|i^n))}
\] 
by consuming at most $n(p_j+\delta_1)(S(\Delta(\psi_j))+\delta_2)$ copies of $\Phi_2$. 
So we can prepare
\[
  \bigotimes_j \rho_j^{(N(j|i^n))} =: \rho_{i^n} \stackrel{(1-\epsilon_2)^{|\Omega|}}{\approx} \psi_{i^n}, 
\]
using at most $\sum_j n(p_j+\delta_1)(S(\Delta(\psi_j))+\delta_2)$ copies of $\Phi_2$. 
Now we prepare a mixed state as the convex combination of these $\rho_{i^n}$, i.e.
\[
  \rho^{(n)} = \frac{1}{p^{\ox n}(\cT)} \sum_{i^n\in \cT}p_{i^n} \rho_{i^n}.
\]
By the joint concavity of the fidelity \cite{Uhlmann,Jozsa}, we get
\[\begin{split}
  F(\rho^{\otimes n},\rho^{(n)}) &\geq p^{\ox n}(\cT) F\left (\frac{\rho(\cT)}{p^{\ox n}(\cT)},\rho^{(n)} \right) \\
                                 &\geq (1-\epsilon_1)(1-\epsilon_2)^{|\Omega|}.
\end{split}\]
Since $|\Omega|$ is bounded, when $n\to \infty$, $\epsilon_1,\epsilon_2 \to 0$ and 
$F(\rho^{\otimes n},\rho^{(n)})\to 1$ for $\delta_1,\delta_2 \to 0$. 
The required rate of $\Phi_2$ tends to $ \sum p_iS(\Delta(\psi_i))=C_f(\rho)$. 
This shows that the rate $C_f(\rho)$ is asymptotically achievable.

For the optimality, consider an incoherent protocol that produces 
$\rho^{(n)}={\cal L}(\Phi_2^{\otimes m})$,
which is close to $\rho^{\ox n}$ up to error $\epsilon$, 
i.e.~$B\bigl(\rho^{(n)},\rho^{\ox n}\bigr) \leq \epsilon$. 
From the monotonicity, asymptotic continuity and additivity of $C_f(\rho)$, we get 
\begin{align}
  m &=    C_f(\Phi_2^{\otimes m}) \label{eq:1}   \\
    &\geq C_f(\rho^{(n)})         \label{ineq:2} \\
    &\geq C_f(\rho^{\otimes n}) - n\epsilon\log d
                                - (1+\epsilon) h\left(\!\frac{\epsilon}{1+\epsilon}\!\right) \label{ineq:3}\\
    &=    nC_f(\rho) - n\epsilon\log d
                     - (1+\epsilon) h\left(\!\frac{\epsilon}{1+\epsilon}\!\right),           \label{eq:4}
\end{align}
where Eqs.~(\ref{eq:1}) and (\ref{eq:4}) come from the additivity, 
Ineq.~(\ref{ineq:2}) from the monotonicity and Ineq.~(\ref{ineq:3}) from the 
asymptotic continuity.
Thus,
\[
  \frac{m}{n} \geq C_f(\rho)- \epsilon\log d - (1+\epsilon) h\left(\!\frac{\epsilon}{1+\epsilon}\!\right). 
\]
Letting $\epsilon \rightarrow 0$ as $n\rightarrow\infty$, 
we get $C_c(\rho)\geq C_f(\rho)$, as advertised.
\end{proofthm}

\medskip
\begin{proofthm}{\bf of Theorem \ref{additivity}.}
We observe that the coherence of formation $C_f(\rho)$ is equal to the 
entanglement of formation of the bipartite state $\CNOT(\rho\otimes \proj{0})\CNOT^\dagger$. 
The latter is a maximally correlated state in the sense that in any 
decomposition into pure states ensemble, the pure states have the same Schmidt basis. 
In \cite{Vidal-additivity}, the entanglement of formation for this class of states
is proved to be additive. So the coherence of formation is additive.

Additivity of $C_r$ comes directly from Eq.~(\ref{eq:C_r}) because 
$S(\rho\otimes\sigma)=S(\rho)+S(\sigma)$ and 
$S(\Delta(\rho\otimes\sigma))=S(\Delta(\rho))+S(\Delta(\sigma))$. 
\end{proofthm}

\medskip
\begin{proofthm}{\bf of Theorem \ref{thm:criterion}.}
Given a mixed state $\rho^A$, we construct the bipartite state 
$\sigma^{AB}=\CNOT(\rho^A\otimes \proj{0}^{B})\CNOT^\dagger$ which is a maximally
correlated state of the form $\sum_{ij}\rho_{ij}\ketbra{ii}{jj}$. 
Now the reversibility implies that $S(\sigma^B)-S(\sigma^{AB})=E_f(\sigma^{AB})$. 
Applying Lemma \ref{lemma:WY} to this and noticing that the $\sigma^A=\sigma^B$
we get
\[
  \sigma^{AB} = \bigoplus_j p_j \proj{\phi_j}^{AB},
\] 
where $\phi_j^B\perp\phi_k^B$ when $j\neq k$. Then using the $\CNOT$ again 
on $\sigma^{AB}$ we recover the original state $\rho^A$ of the form
\[
  \rho^A = \bigoplus_j p_j \proj{\phi_j}^{A},
\]
where each $\ket{\phi_j}^{A}$ is in the subspace ${\cal H}_j=\operatorname{span}\{ \ket{i} : i\in S_j\}$ 
and these subspaces are orthogonal to each other.
\end{proofthm}

\medskip
\begin{proofthm}{\bf of Theorem \ref{bound}.}
This follows from Theorem \ref{thm:distillable}, $C_d(\rho)=C_r(\rho)$ and the fact 
that $C_r(\rho)$ is a faithful coherence measure in the sense that $C_r(\rho)=0$ 
if and only if $\rho$ is incoherent. 
\end{proofthm}

\section{C. Generalized model}
While we restricted our treatment for simplicity of notation
to the case that the incoherent states are precisely a
fixed orthogonal basis of the Hilbert space and convex combinations,
we observe that our results carry over unchanged to the case
of a general decomposition into subspaces $\cH = \bigoplus_i \cH_i$.

In generalization of the picture in the main body of the present work, 
with an orthogonal basis and its superpositions, and inspired by 
{\AA}berg~\cite{Aaberg:superposition}, let us consider an
orthogonal decomposition of the Hilbert space $\cH$ into eigenspaces 
of some observable
\[
  O   = \sum_i E_i P_i,   \text{ so that }
  \cH = \bigoplus_i \cH_i,
\]
with an orthogonal direct sum, and $P_i$ is the projector
onto the subspace $\cH_i$ of $\cH$. With respect to this decomposition,
we declare states as \emph{incoherent} that respect the direct
sum decomposition:
\begin{equation}
  \label{eq:Delta}
  \Delta := \left\{ \rho = \sum_i q_i \rho_i : P_i \rho_i P_i = \rho_i \right\},
\end{equation}
where the $\rho_i$ are states (supported on $\cH_i$) and $(q_i)$
is a probability distribution. For the case of rank-one projectors
$P_i = \proj{i}$ we recover the notion of Baumgratz \emph{et al.}~\cite{BCP14}.

Slightly abusing notation, we also introduce the decohering projection map
\begin{equation}
  \label{eq:Delta-map}
  \Delta(A) = \sum_i P_i A P_i,
\end{equation}
such that $\Delta = \Delta\bigl(\cS(\cH)\bigr)$ is the image of the
state space under the decohering map; the image of the set of all
operators on ${\cal H}$ we denote by
\[
  \mathfrak{D} := \Delta\bigl(B(\cH)\bigr) = \bigoplus_i B(\cH_i).
\]

We call an operator $K$ acting on $\cH = \bigoplus_i \cH_i$,
or more generally mapping $\cH$ to $\cK = \bigoplus_j \cK_j$,
\emph{incoherent (IC)} if $K \cH_i \subset \cK_{j(i)}$ for every
$i$ and a function $i\mapsto j=j(i)$. If $j(i)$ is injective
-- which for operators mapping $\cH$ to itself means that
it is a permutation --, we call the operator 
\emph{strictly incoherent (strictly IC)}. The operator $K$ is strictly incoherent if
and only if $K$ as well 
as $K^\dagger$ are incoherent. When composing
systems, each with its own orthogonal decomposition,
$\cH = \bigoplus_i \cH_i$ and $\cH' = \bigoplus_j \cH_i'$,
we consider the tensor product space with the decomposition
$\cH \ox \cH' = \bigoplus_{ij} \cH_i \ox \cH_j'$. In this way,
the decohering map on the composite space becomes the tensor
product $\Delta \ox \Delta'$.

\medskip
With this, following~\cite{BCP14}, we can introduce the sets
of incoherent and strictly incoherent operations as special cptp maps:

\begin{align*}
  {\cal IC}   &:= \left\{ T \text{ cptp: } T(\rho) = \sum_\alpha K_\alpha \rho K_\alpha^\dagger,\ 
                                         \forall\alpha\ K_\alpha \text{ IC} \right\}, \\
  {\cal IC}_0 &:= \left\{ T \text{ cptp: } T(\rho) = \sum_\alpha K_\alpha \rho K_\alpha^\dagger,\ 
                                         \forall\alpha\ K_\alpha \text{ strictly IC} \right\}.
\end{align*}
By slight abuse of notation, we will write for a incoherent 
(strictly incoherent) Kraus operator $K$, that $K\in {\cal IC}$ ($K\in {\cal IC}_0$). 

Finally, the non-coherence-generating maps according to 
Brand\~{a}o and Gour~\cite{BrandaoGour} are
\[
  \widetilde{{\cal IC}} := \left\{ T \text{ cptp: } T(\Delta) \subset \Delta \right\},
\]
so that ${\cal IC}_0 \subsetneq {\cal IC} \subsetneq \widetilde{{\cal IC}}$.

\medskip
Our main results, Theorems~\ref{thm:majorization}, \ref{thm:asymptotic},
\ref{thm:distillable}, \ref{thm:cost}, and the additivity Theorem~\ref{additivity},
remain unchanged, as one can check. 

\medskip
\begin{remark}
  \label{rem:IC-IC0}
  \normalfont
  All of the transformations or asymptotic transformations
  contained in the above mentioned theorems can be effected
  by strictly incoherent operations (${\cal IC}_0$), except the distillation of mixed states.
  Indeed, it is not clear whether or not the rate $C_r(\rho)$
  is attainable with strictly incoherent operations. Our
  protocol for Theorem~\ref{thm:distillable}, at any rate, uses 
  ${\cal IC}$ in a non-trivial way.
\end{remark}

\end{document}